\newif{\ifcomentarios}\comentariosfalse\newtheorem{theorem}{Theorem}
\newtheorem{definition}[theorem]{Definition}
\newtheorem{lemma}[theorem]{Lemma}
\newtheorem{remark}[theorem]{Remark}
\newtheorem{corollary}[theorem]{Corollary}
\newtheorem{proposition}[theorem]{Proposition}
\newenvironment{proof}[1][Proof]{\textit{#1.} }{\hfill $\Box$}
\newcommand{\OBSI}{\begin{remark}\begin{rm}}
\newcommand{\OBSF}{\end{rm}\end{remark}}
\newcommand{\DEFI}{\begin{definition}\begin{rm}}
\newcommand{\DEFF}{\end{rm}\end{definition}}
\newcommand{\zerarcounters}{\setcounter{equation}{0}\setcounter{theorem}{0}}
\newcommand{\be}{\begin{eqnarray}}
\newcommand{\en}{\end{eqnarray}}
\newcommand{\bee}{\begin{eqnarray*}}
\newcommand{\ene}{\end{eqnarray*}}
\newcommand{\LL}{\mathrm{L}}
\newcommand{\CC}{\mathrm{C}}
\newcommand{\var}{\varepsilon}
\DeclareMathOperator*{\supp}{supp}
\DeclareMathOperator*{\esssup}{ess.sup}
\DeclareMathOperator*{\dom}{dom}
\begin{document}
\title{A characterization of singular packing subspaces with an application to  limit-periodic operators}
\author{Silas L. Carvalho \quad and \quad C\'esar  R. de Oliveira}

\date{\today}

\maketitle

\begin{abstract}
A new characterization of the singular packing subspaces of general bounded self-adjoint operators is presented, which is used to show that the set of operators whose spectral measures have upper packing dimension equal to one is a~$G_\delta$ (in suitable metric spaces). As an application, it is  proven that, generically (in space of continuous sampling functions), spectral measures of the limit-periodic Schr\"odinger operators have upper packing dimensions equal to one. Consequently, in a generic set, these operators are quasiballistic.
 \end{abstract}

\

\

\section{Introduction and results}
\label{INT}

\zerarcounters

A study of packing continuity and singularity of bounded self-adjoint operators is performed. Our main goals here are to present a dynamical characterization of singular $\alpha$-packing subspaces (Theorem~\ref{l21}), and  use this result (along with the well known equivalence between strong convergence and strong dynamical convergence of operators) to prove, for some metric space of self-adjoint operators, that the set of operators whose spectral measures have upper packing dimension equal to~1 is a~$G_\delta$ set (Theorem~\ref{prop1uPd}). The dynamical characterization will be obtained in Section~\ref{sectHW} through the notion of uniformly $\alpha$-H\"older singular measures and Lemma~\ref{Stri} (which is a ``singular version'' of Strichartz's Theorem~\cite{strichartz}).

To put our work into perspective, we mention that, although important, it is not always easy to present dynamical characterizations of spectral subspaces of a self-adjoint operator~$T$. For a vector~$\xi$, denote the so-called return probability at time~$t$ by $p_\xi(t):=\left|\langle\xi,e^{-itT}\xi\rangle\right|^2$, clearly a dynamical quantity. It is well known (see, for instance, Theorem~13.5.5 in~\cite{Cesar}) that the absolutely continuous subspace of~$T$ is the closure of the vectors~$\xi$ so that $p_\xi\in\LL^1(\mathbb R)$. By using Strichartz's Theorem, recalled in Section~\ref{sectHW}, Last (Theorem~5.3 in~\cite{Last}) has presented a dynamical characterization of the $\alpha$-Hausdorff continuous subspace~$\mathcal{H}_{\alpha\mathrm{Hc}}^T$ of~$T$ (the definition is similar to~$\mathcal{H}_{\alpha\mathrm{Pc}}^T$ in Proposition~\ref{DEC}) in terms of averages $\langle p_\xi\rangle(t):=\frac1t\int_0^t p_\xi(s)\,{\mathrm d}s$, that is, given  $0<\alpha<1$, then for all~$\varepsilon>0$,
\[
\mathrm{closure}\left\{\xi\mid \sup_t t^{\alpha+\varepsilon}  \langle p_\xi\rangle(t)<\infty  \right\}\subset \mathcal{H}_{\alpha\mathrm{Hc}}^T\subset \mathrm{closure}\left\{\xi\mid \sup_t t^{\alpha}  \langle p_\xi\rangle(t)<\infty  \right\}.
\]
 Here we have a parallel of this result (Theorem~\ref{l21}) for the $\alpha$-packing singular subspace; for this, we have introduced an appropriate quantity in equation~\eqref{DefL} and the concept of U$\alpha$HS measures (see Definition~\ref{HoS}).

We apply our general packing results to a class of limit-periodic operators. These are discrete one-dimensional ergodic Schr\"odinger operators, denoted by $H^\kappa_{g,\tau}$, acting in $l^2(\mathbb{Z})$, whose action is given by
\begin{equation}\label{LPo}
(H^\kappa_{f,\sigma}\psi)_n=\psi_{n+1}+\psi_{n-1}+V_n(\kappa)\psi_n\;,
\end{equation}
\noindent with 
\begin{equation}\label{VF}
V_n(\kappa)=g(\tau^n(\kappa))\,;
\end{equation} \noindent here, $\kappa$ belongs to a Cantor group~$\Omega$, $\tau:\Omega\rightarrow\Omega$ is a minimal translation on~$\Omega$ and $g:\Omega\rightarrow\mathbb{R}$ a continuous sampling function, i.e., $g\in\CC(\Omega,\mathbb R)$ with the norm of uniform convergence. For more details, see~\cite{Avila}.

For each~$\kappa\in\Omega$, let $X_{\kappa}$ be the set of limit-periodic operators $H_{g,\tau}^\kappa$ given by \eqref{LPo} and \eqref{VF}, with metric 
\begin{equation}\label{metric3}
d(H_{g,\tau}^\kappa,H_{g',\tau}^\kappa)=\|g-g'\|_\infty\;.
\end{equation} We shall prove the following

\begin{theorem}\label{LPO}
 For each $\kappa\in\Omega$, the set~$C_{\mathrm{1uPd}}^\kappa:=\{T\in X_\kappa\mid \sigma(T)$ is purely $1$-upper packing dimensional$\}$ is generic in~${X}_{\kappa}$.
\end{theorem}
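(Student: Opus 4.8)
The plan is to read Theorem~\ref{LPO} as an application of the abstract $G_\delta$ result (Theorem~\ref{prop1uPd}) together with a density argument supplied by periodic approximations. First I would check that $X_\kappa$ is a complete metric space meeting the hypotheses of Theorem~\ref{prop1uPd}. The operator $H_{g,\tau}^\kappa-H_{g',\tau}^\kappa$ is multiplication on $l^2(\mathbb Z)$ by the bounded sequence $\bigl(g(\tau^n(\kappa))-g'(\tau^n(\kappa))\bigr)_n$, so its norm is $\sup_n|g(\tau^n(\kappa))-g'(\tau^n(\kappa))|$; since $\tau$ is minimal the orbit $\{\tau^n(\kappa)\}$ is dense in $\Omega$ and $g-g'$ is continuous, whence this supremum equals $\|g-g'\|_\infty=d(H_{g,\tau}^\kappa,H_{g',\tau}^\kappa)$. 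Thus the metric \eqref{metric3} coincides with the operator-norm distance, the map $g\mapsto H_{g,\tau}^\kappa$ is an isometry of the complete space $\CC(\Omega,\mathbb R)$ onto $X_\kappa$ (injectivity again using minimality), and in particular metric convergence is norm convergence and hence strong convergence of operators. Theorem~\ref{prop1uPd} then applies and shows that $C_{\mathrm{1uPd}}^\kappa$ is a $G_\delta$ set in $X_\kappa$.

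It remains to prove that $C_{\mathrm{1uPd}}^\kappa$ is dense, and here I would invoke periodic operators. Because $\Omega$ is a Cantor group presented as an inverse limit of finite cyclic groups, the continuous functions factoring through a finite quotient are dense in $\CC(\Omega,\mathbb R)$; for such $g$ the sampling relation \eqref{VF} yields a periodic potential, so $H_{g,\tau}^\kappa$ is a one-dimensional periodic Schr\"odinger operator. By Floquet--Bloch theory such an operator has purely absolutely continuous spectrum, and any absolutely continuous measure has local packing dimension equal to $1$ at almost every point of its support (at a Lebesgue point $x$ of the density $f$ with $f(x)>0$ one has $\mu(B(x,r))\sim 2rf(x)$). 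Hence all spectral measures of these periodic operators are purely $1$-upper packing dimensional, so the periodic operators belong to $C_{\mathrm{1uPd}}^\kappa$; density of the periodic sampling functions then gives density of $C_{\mathrm{1uPd}}^\kappa$ in $X_\kappa$.

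Combining the two steps, $C_{\mathrm{1uPd}}^\kappa$ is a dense $G_\delta$ subset of the complete metric space $X_\kappa$, i.e. it is generic, which is the assertion.

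I expect the genuine work to be bookkeeping rather than a single hard estimate, since the analytic content has been isolated in Theorems~\ref{l21} and~\ref{prop1uPd}. The two points requiring care are, first, verifying that the concrete space $X_\kappa$ really satisfies the convergence hypothesis behind Theorem~\ref{prop1uPd} (the norm--metric identity above, which rests on minimality) and that the set named in Theorem~\ref{LPO} is exactly the one controlled by that theorem; and second, confirming, at the level of the paper's precise definition of ``purely $1$-upper packing dimensional,'' that a purely absolutely continuous operator qualifies. The density step is not itself the obstacle once one cites the standard fact that periodic potentials are dense among limit-periodic ones.
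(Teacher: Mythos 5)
Your proposal is correct and follows essentially the same route as the paper: the paper also obtains density of $C^{\kappa}_{1\mathrm{uPd}}$ from density of operators with purely absolutely continuous spectrum (it cites the Damanik--Gan theorem, Theorem~\ref{TDG}, whose proof is precisely your periodic-approximation argument) and combines this with the abstract $G_\delta$ statement of Theorem~\ref{prop1uPd} via Corollary~\ref{R1}. The one detail you flag but do not carry out --- passing from the fixed-vector set $C^{\psi}_{1\mathrm{uPd}}$ to the statement about \emph{all} spectral measures of $T$ --- is settled in the paper by Remark~\ref{RIMP}, using the cyclic family $\{\delta_{-1},\delta_0\}$ for discrete Schr\"odinger operators.
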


We stress that in a recent work~\cite{SCC}, it was proven, again for some metric spaces of self-adjoint operators, that the set of operators whose spectral measures have upper correlation dimension equal to 1 is a~$G_\delta$. Since every  Borel and finite measure on~$\mathbb{R}$ whose upper correlation dimension is one has upper packing dimension equal to 1, one can conclude that if the hypotheses in Theorem 4.1 in~\cite{SCC} are fulfilled, then Theorem~\ref{prop1uPd} follows. However, the hypotheses in Theorem~\ref{prop1uPd} below are weaker than in Theorem 4.1 in~\cite{SCC}, and therefore  easier to meet. In particular, we were not able to apply the method discussed in~\cite{SCC} to the class~\eqref{LPo} of limit periodic operators, since the estimation of upper correlation dimension seems to be far from trivial in this case.

The notorious Wonderland theorem~\cite{Simon} gives sufficient conditions for a set of self-adjoint operators, whose spectrum is purely singular continuous, to be generic. Theorem~\ref{prop1uPd} is another step towards a better comprehension of the typical (in a topological sense) behavior of the spectral measures of a self-adjoint operator, since it generalizes Wonderland theorem in the sense that every application of Wonderland theorem discussed in~\cite{Simon} for bounded operators can be extended to results about generic sets of operators whose spectral measures have upper packing dimension equal to 1 (on top of singular continuous spectrum).

The proof of Theorem~\ref{LPO} is presented in Section~\ref{PLPO}, since a previous preparation is required. Let us briefly discuss some dynamical consequences of Theorem~\ref{LPO} within the context of the unitary evolution group $e^{-itT}$, that is, the solution to the corresponding Schr\"odinger equation~\cite{Cesar}. Regarding the dynamics generated by a self-adjoint operator~$T$ acting on~$l^2(\mathbb{Z})$,  the  growth of the width of ``quantum wave packets'' is usually probed by the algebraic growth of the (time-averaged) $q$-moments, $q>0$, 
\[
\langle M_{T}^q\rangle(t):=\sum_{n}|n|^q \,\frac2t\int_0^\infty e^{-2s/t} |\langle
e^{-isT}\delta_0,\delta_n\rangle|^2\,\mathrm{d}s\,,
\]  of the position operator at time~$t>0$; such wave packets are represented here by the initial state~$\delta_0$. To  describe this algebraic growth $\langle M_{T}^q\rangle(t)\sim t^{q\beta(q)}$ for large~$t$, one usually considers the lower and the upper transport exponents, given respectively by
\[
\beta_{T}^-(q):=\liminf_{t\rightarrow\infty}\frac{\ln \langle M_{T}^q\rangle(t)}{q\ln t},\quad 
\beta_{T}^+(q):=\limsup_{t\rightarrow\infty}\frac{\ln \langle M_{T}^q\rangle(t)}{q\ln t}.
\]

The following result, extracted from~\cite{GK}, gives basic properties of moments within the setting of bounded self-adjoint operators~$T$ acting on~$l^2(\mathbb{Z})$, in particular, for bounded discrete Schr\"odinger operators (that is, operators whose action is given by~\eqref{LPo} with  bounded potentials~$(V_n)$).

\begin{proposition} \label{GK} If~$T$ is a bounded self-adjoint operator on~$l^2(\mathbb Z)$, then 
\begin{enumerate}
\item $\langle M_{T}^q\rangle(t)$ is well defined for all $q,t>0$;
\item $\beta_{T}^\pm(q)$ are increasing functions of $q$;
\item $\beta_{T}^\pm(q)\in[0,1]$, for all $q>0$.
\end{enumerate}
\end{proposition}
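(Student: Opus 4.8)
The plan is to reduce all three items to elementary properties of the probabilities $P_n(t):=\frac{2}{t}\int_0^\infty e^{-2s/t}|a_n(s)|^2\,\mathrm{d}s$, where $a_n(s):=\langle\delta_n,e^{-isT}\delta_0\rangle$, so that $\langle M_T^q\rangle(t)=\sum_n|n|^q P_n(t)$ is the $q$-th moment of $P_t:=(P_n(t))_{n\in\mathbb{Z}}$. First I would note that $P_t$ is a \emph{probability} measure on $\mathbb{Z}$: unitarity of $e^{-isT}$ gives $\sum_n|a_n(s)|^2=\|e^{-isT}\delta_0\|^2=1$, and since $\frac{2}{t}\int_0^\infty e^{-2s/t}\,\mathrm{d}s=1$, it follows that $\sum_n P_n(t)=1$. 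Item (2) is then immediate: $\langle M_T^q\rangle(t)^{1/q}$ is the $\LL^q(P_t)$-norm of the function $n\mapsto|n|$, which by Jensen's inequality is nondecreasing in $q$ on any probability space. Hence $q\mapsto\frac{1}{q}\ln\langle M_T^q\rangle(t)$ is nondecreasing for each fixed $t$; dividing by $\ln t>0$ (for $t>1$) and passing to the $\limsup$ and $\liminf$ as $t\to\infty$ shows that $\beta_T^\pm(q)$ are increasing in $q$.

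For item (1) and the bound $\beta_T^\pm(q)\le1$ the crux is a ballistic upper bound (finite speed of propagation), which I expect to be the main obstacle. Expanding $a_n(s)=\sum_{k\ge0}\frac{(-is)^k}{k!}\langle\delta_n,T^k\delta_0\rangle$ and using the nearest-neighbour form~\eqref{LPo} of $T$, the power $T^k$ connects sites at distance at most $k$, so $\langle\delta_n,T^k\delta_0\rangle=0$ whenever $k<|n|$, while $|\langle\delta_n,T^k\delta_0\rangle|\le\|T\|^k$ always. Thus $|a_n(s)|\le\sum_{k\ge|n|}(|s|\,\|T\|)^k/k!$, which is super-exponentially small in $|n|$ once $|n|\gtrsim\|T\|\,|s|$ (outside the ``light cone''). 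Writing $X$ for the position operator $(X\psi)_n=n\psi_n$ and combining this tail bound with $\sum_n|a_n(s)|^2=1$, I would obtain $\||X|^{q/2}e^{-isT}\delta_0\|^2=\sum_n|n|^q|a_n(s)|^2\le C_q(1+|s|)^q<\infty$, which proves (1); integrating against the Abel kernel then yields $\langle M_T^q\rangle(t)\le C_q'(1+t)^q$, whence $\beta_T^+(q)\le1$. This is the one place where a structural hypothesis is genuinely needed: the finite hopping range (equivalently, boundedness of $[T,X]$) cannot be dropped, since a general bounded $T$ with long-range matrix elements may spread $\delta_0$ instantaneously and render the moments infinite.

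It remains to prove $\beta_T^\pm(q)\ge0$, for which I would show that $\langle M_T^q\rangle(t)$ stays bounded below by a positive constant. Since $|n|^q\ge1$ for $|n|\ge1$, we have $\langle M_T^q\rangle(t)\ge\sum_{n\neq0}P_n(t)=1-P_0(t)$, so it suffices to prove $\liminf_{t\to\infty}(1-P_0(t))>0$. By Wiener's theorem the Abel average $P_0(t)$ converges, as $t\to\infty$, to $\sum_x|\mu_{0,0}(\{x\})|^2$, where $\mu_{0,0}$ is the (probability) spectral measure of $\delta_0$. The decisive point is that $\delta_0$ is quantitatively far from every eigenvector: one computes $\|(T-x)\delta_0\|^2=2+(V_0-x)^2\ge2$, while, since $(T-x)$ annihilates $P_{\{x\}}\delta_0$, also $\|(T-x)\delta_0\|^2\le(2\|T\|)^2(1-\mu_{0,0}(\{x\}))$; these force $\mu_{0,0}(\{x\})\le1-\frac{1}{2\|T\|^2}$ uniformly in $x$, hence $\sum_x|\mu_{0,0}(\{x\})|^2\le\max_x\mu_{0,0}(\{x\})<1$. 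Therefore $\langle M_T^q\rangle(t)\ge c>0$ for large $t$, so $\liminf_{t\to\infty}\ln\langle M_T^q\rangle(t)/(q\ln t)\ge0$ and $\beta_T^-(q)\ge0$, completing (3).
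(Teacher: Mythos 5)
Your proof is correct, but there is nothing in the paper to compare it with line by line: Proposition~\ref{GK} is stated without proof and simply attributed to Germinet--Klein~\cite{GK}. Your self-contained argument is sound at every step. Observing that $(P_n(t))_{n}$ is a probability on $\mathbb{Z}$ reduces item~2 to the monotonicity of $\LL^q$-norms on a probability space; the light-cone estimate $\langle\delta_n,T^k\delta_0\rangle=0$ for $k<|n|$ yields $\sum_n|n|^q|a_n(s)|^2\le C_q(1+|s|)^q$ and hence item~1 together with $\beta_T^{\pm}(q)\le1$; and the Abel-averaged Wiener theorem combined with the uniform bound $\mu_{0,0}(\{x\})\le 1-\tfrac{1}{2\|T\|^2}$ --- obtained by playing $\|(T-x)\delta_0\|^2=2+(V_0-x)^2\ge2$ against $\|(T-x)\delta_0\|^2\le\|T-x\|^2\left(1-\mu_{0,0}(\{x\})\right)$, with $|x|\le\|T\|$ on the point spectrum --- gives $\limsup_t P_0(t)<1$ and hence $\beta_T^{\pm}(q)\ge0$. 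Your structural remark is also well taken and worth keeping: as literally stated, for an arbitrary bounded self-adjoint $T$ on $l^2(\mathbb Z)$, items~1 and~3 fail (long-range matrix elements can make the moments infinite, and $T=0$ makes them vanish identically, forcing $\beta_T^{\pm}=-\infty$), so the parenthetical restriction to discrete Schr\"odinger operators in the surrounding text is the operative hypothesis; your proof correctly isolates the two places where it enters, namely the finite hopping range for the ballistic upper bound and the identity $\|(T-x)\delta_0\|^2\ge2$ for the lower bound. The net effect is that you have supplied an elementary, self-contained proof of a statement the paper only imports, at the price of making explicit a hypothesis the paper leaves implicit.
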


In case $\beta_{T}^+(q)=1$ (resp.\ $\beta_{T}^-(q)=1$), for all $q>0$, the corresponding dynamics is called {\em quasi-ballistic} (resp.\ {\em ballistic}). We shall make use of the general inequality (see Definition~\ref{defDimMu} for the description of the upper packing dimension $\dim^+_{\mathrm P}(\cdot)$)
\begin{equation}\label{bLIP}
 \beta_{T}^+(q) \ge \dim^+_{\mathrm P}\left(\mu^T_{\delta_0}\right)\,,\qquad \forall q>0,
\end{equation} 
proven in~\cite{Guar}. It is worth mentioning that  $ \beta_{T}^-(q)$ is related to the upper Hausdorff dimension~\cite{Guar1,BCM}, but we will not use them in this work. Following the discussion above and Theorem~\ref{LPO}, one has

\begin{corollary}\label{AMD1} 
For every~$\kappa\in\Omega$, the set~$C_{{\mathrm{QB}}}^\kappa:=\{T\in X_\kappa\mid \beta_{T}^+(q)=1$ for every $q>0\}$ is generic in~$X_\kappa$.
\end{corollary}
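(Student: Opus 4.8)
The plan is to deduce the corollary directly from Theorem~\ref{LPO} and the transport bound~\eqref{bLIP}, by establishing the inclusion $C_{\mathrm{1uPd}}^\kappa\subseteq C_{\mathrm{QB}}^\kappa$ and then invoking the elementary fact that any superset of a generic (residual) set is again generic.

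First I would fix an arbitrary $T\in C_{\mathrm{1uPd}}^\kappa$, so that $\sigma(T)$ is purely $1$-upper packing dimensional. The role of the word ``purely'' is to guarantee that the spectral measure has no component of packing dimension strictly below $1$; applied to the vector $\delta_0$, this forces $\dim^+_{\mathrm P}(\mu^T_{\delta_0})\ge 1$, while any finite Borel measure on $\mathbb{R}$ (and $\sigma(T)$ is a bounded subset of $\mathbb{R}$) has upper packing dimension at most $1$, so in fact $\dim^+_{\mathrm P}(\mu^T_{\delta_0})=1$. Feeding this into~\eqref{bLIP} yields $\beta_{T}^+(q)\ge \dim^+_{\mathrm P}(\mu^T_{\delta_0})=1$ for every $q>0$. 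The reverse inequality is immediate from Proposition~\ref{GK}(3), which pins $\beta_{T}^+(q)\in[0,1]$. Combining the two bounds gives $\beta_{T}^+(q)=1$ for all $q>0$, i.e. $T$ is quasiballistic, so $T\in C_{\mathrm{QB}}^\kappa$. Since $T$ was arbitrary, this proves $C_{\mathrm{1uPd}}^\kappa\subseteq C_{\mathrm{QB}}^\kappa$.

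Finally, Theorem~\ref{LPO} asserts that $C_{\mathrm{1uPd}}^\kappa$ is generic in $X_\kappa$, hence contains a dense $G_\delta$; as its superset, $C_{\mathrm{QB}}^\kappa$ contains the same dense $G_\delta$ and is therefore generic in $X_\kappa$, which completes the argument. The only non-routine point is the passage from the purity of $\sigma(T)$ to the statement $\dim^+_{\mathrm P}(\mu^T_{\delta_0})=1$ for the particular vector $\delta_0$; once that is in hand, the remainder is the two-sided pinching of $\beta_{T}^+(q)$ between $1$ and $1$, together with the trivial propagation of genericity to supersets. I do not expect any serious technical obstacle beyond making the purity/local-dimension bookkeeping precise.
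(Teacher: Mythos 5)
Your argument is correct and follows essentially the same route as the paper: the inclusion $C_{\mathrm{1uPd}}^\kappa\subset C_{\mathrm{QB}}^\kappa$ obtained by pinching $\beta_T^+(q)$ between the lower bound~\eqref{bLIP} applied to $\dim^+_{\mathrm P}(\mu^T_{\delta_0})=1$ and the upper bound from Proposition~\ref{GK}, followed by Theorem~\ref{LPO} and the fact that a superset of a generic set is generic. Your explicit remark on passing from the purity of $\sigma(T)$ to $\dim^+_{\mathrm P}(\mu^T_{\delta_0})=1$ is a point the paper leaves implicit, but it is handled correctly.
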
 

In Section~\ref{HPM}, we recall suitable results on decompositions of Borel measures with respect to packing measures and dimensions, and how these components are related to pointwise scaling exponents and generalized dimensions. In Section~\ref{sectHW}, we present a singular version-like of Strichartz's Theorem~\cite{strichartz,Last} for finite measures, which is used to give a dynamical characterization of packing singular subspaces of  bounded self-adjoint operators in general separable Hilbert spaces  (see Theorem~\ref{l21}). The very same arguments lead to a version for spectral measures restricted to any given subinterval of the real line. 

In Section~\ref{GenQB}, we state and prove Theorem~\ref{prop1uPd}. In the last section, as previously stated, we present the proof of Theorem~\ref{LPO}. It is important to emphasize that the results of Section~\ref{GenQB} can be used to prove, for every general class of bounded operators (including classes of not necessarily Schr\"odinger-like operators) discussed in~\cite{Simon}, existence of generic sets of operators whose spectral measures have upper packing dimensions equal to 1.

Some words about notation: $\mathcal H$ will always denote a complex separable Hilbert space. The spectrum of a self-adjoint operator~$T$ is denoted by~$\sigma(T)$. $\mu$ will always indicate, unless explicitly stated, a finite nonnegative Borel measure on~$\mathbb R$, and its restriction to a Borel set~$E$ will be denoted by~$\mu_{;E}$;  it is \textit{singular} if~$\mu$ and the Lebesgue measure are mutually singular; it is \textit{supported} on a Borel set~$S$ if $\mu(\mathbb R\setminus S)=0$. $\supp(\mu)$ denotes the \textit{support} of~$\mu$, that is, the complement of the largest open set~$B$ with~$\mu(B)=0$.  $P^T(E)$ represents the spectral projection of~$T$ associated with the Borel set~$E\subset\mathbb{R}$.

\section{Basic tools} \label{HPM}   
\zerarcounters

In this section, we recall important decompositions of Borel measures on~$\mathbb R$ with respect to packing measures and dimensions, along with the corresponding spectral decompositions of self-adjoint operators. We also recollect how these decompositions are related to the upper and generalized dimensions. This discussion parallels the rather well-known corresponding Hausdorff properties. 

\subsection{Packing decompositions}
\label{HPM1}

Given a set~$S\subset\mathbb R$ and $0\le\alpha\le1$, denote by $h^\alpha(S)$ its $\alpha$-{dimensional (exterior)  Hausdorff measure} and by~$\dim_{\mathrm H}(S)$ its Hausdorff dimension. Since packing measures and dimensions are not so familiar as the Hausdorff versions, we recall their definitions in what follows. 

A $\delta$-packing of an arbitrary set $S\subset\mathbb{R}$ is a countable disjoint collection $(\bar B(x_k;r_k))_{k\in\mathbb{N}}$ of closed intervals centered at~$x_k\in S$ and radii $r_k\le\delta/2$, so with diameters at most of~$\delta$. Define $P^\alpha_\delta(S)$, $0\le \alpha\le1$, as
\begin{equation*}
 P^\alpha_\delta(S)=\sup\Big\{\sum_{k=1}^\infty(2r_k)^\alpha\mid (\bar B(x_k;r_k))_k \;\mathrm{is\; a} \;\delta\text{-packing\;of}\;S\Big\};
\end{equation*}
that is,  the supremum is taken over all~$\delta$-packings of~$S$. Then, take the decreasing limit
\begin{equation*}
 P_0^\alpha(S)=\lim_{\delta \downarrow 0}P^\alpha_\delta(S)
\end{equation*} as  a pre-measure.
\DEFI
The \textit{$\alpha$-packing (exterior) measure} $P^\alpha(S)$ of~$S$ is given by
\begin{equation*}
 P^\alpha(S):=\inf\Big\{\sum_{k=1}^\infty P_0^\alpha(S_k)\mid S\subset\bigcup_{k=1}^\infty S_k \Big\} \;. 
\end{equation*}\label{DPack}
\DEFF

The  \textit{packing dimension} of the set~$S$,~$\dim_{\mathrm P}(S)$, is defined  as the infimum of all~$\alpha$ such that  $P^\alpha(S)=0$, which coincides with the supremum of all~$\alpha$ so that~$P^\alpha(S)=\infty$.

It is possible to show~\cite{F} that the Hausdorff and packing dimensions 
are related by the inequality $\dim_{\mathrm H}(S)\le\dim_{\mathrm P}(S)$, and this 
inequality is in general strict. It is also important to mention that $P^\alpha$ and $h^\alpha$ are Borel (regular) measures and,  for~$0\le\alpha<1$, they are not $\sigma$-finite; furthermore,  $P^0\equiv h^0$ and $P^1\equiv h^1$, and they are equivalent, respectively, to  counting and Lebesgue measures. 

\DEFI\label{defDimMu}
 The packing {\em upper dimension} of~$\mu$  is defined as 
\[
\dim_{\mathrm P}^+(\mu):=\inf \{ \dim_{\mathrm P}(S)\mid \mu(\mathbb R\setminus S)=0, \;S\; \mathrm{a\; Borel\; subset\; of}\; \mathbb R \}.
\]
\DEFF

The notions of packing measures and dimensions lead to concepts of continuity and singularity of Borel measures with respect to them. 

\DEFI\label{SCB}
Let $\alpha \in [0,1]$.  $\mu$ is called:
\begin{enumerate}
\item  \textit{$\alpha$-packing continuous}, denoted $\alpha\mathrm{Pc}$, if~$\mu (S)=0$ for every Borel set~$S$ with $\mathrm{P}^{\alpha}(S)=0$.
\item  \textit{$\alpha$-packing singular}, denoted $\alpha\mathrm{Ps}$, if it is supported on some Borel set~$S$  with $\mathrm{P}^{\alpha}(S)=0$. 
\item  \textit{$\alpha$-packing dimension continuous}, denoted $\alpha\mathrm{Pdc}$, if~$\mu (S)=0$ for every Borel set~$S$ with $\dim_{\mathrm{P}}(S)<\alpha$.
\item  \textit{almost~$\alpha$-packing dimension singular}, denoted $\mathrm{a}\alpha\mathrm{Pds}$, if it is supported on some Borel set~$S$ with $\dim_{\mathrm{P}}(S)\le\alpha$.
\item 1-\textit{packing dimensional}, denoted $1\mathrm{Pd}$, if $\mu(S)=0$ for any Borel set $S$ with $\dim_{\mathrm P}(S)<1$.
\end{enumerate}
\DEFF

\begin{proposition}\label{AHC}
Let~$\mu$ as before. 
\begin{enumerate}
\item Fix $\alpha\in(0,1]$. If~$\mu$ is $\alpha\mathrm{Ps}$, then it is $\mathrm{a}\alpha\mathrm{Pds}$. 
\item Fix $\alpha\in[0,1)$. If~$\mu$ is $\mathrm{a}\alpha\mathrm{Pds}$, then it is $(\alpha+\varepsilon)\mathrm{Ps}$ for every $0<\varepsilon\le 1-\alpha$.
\end{enumerate}
\end{proposition}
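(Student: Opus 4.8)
The plan is to reduce both statements to the definitions of the packing exponents collected in Definition~\ref{SCB}, together with the standard behaviour of the packing measure $P^\beta(S)$ as a function of the exponent~$\beta$. The only structural fact I will need is monotonicity in the exponent, namely that $P^\beta(S)<\infty$ and $\beta'>\beta$ force $P^{\beta'}(S)=0$; combined with the definition of the packing dimension as $\dim_{\mathrm P}(S)=\inf\{\beta\mid P^\beta(S)=0\}=\sup\{\beta\mid P^\beta(S)=\infty\}$ recalled after Definition~\ref{DPack}, this gives the ``phase transition'' $P^\beta(S)=0$ for all $\beta>\dim_{\mathrm P}(S)$, which is the crux of item~(2). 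Everything else is bookkeeping with the definitions.

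For item~(1), suppose $\mu$ is $\alpha\mathrm{Ps}$, so that it is supported on a Borel set~$S$ with $P^\alpha(S)=0$. Then $\alpha\in\{\beta\mid P^\beta(S)=0\}$, whence $\dim_{\mathrm P}(S)=\inf\{\beta\mid P^\beta(S)=0\}\le\alpha$. Thus $\mu$ is supported on a Borel set of packing dimension at most~$\alpha$, which is exactly the definition of $\mathrm a\alpha\mathrm{Pds}$. This step is immediate and requires no estimate; note that the hypothesis $\alpha\in(0,1]$ is used only to keep $P^\alpha$ among the packing measures under consideration.

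For item~(2), suppose $\mu$ is $\mathrm a\alpha\mathrm{Pds}$, so that it is supported on a Borel set~$S$ with $\dim_{\mathrm P}(S)\le\alpha$, and fix $0<\varepsilon\le 1-\alpha$; the restriction $\varepsilon\le 1-\alpha$ guarantees $\alpha+\varepsilon\le 1$, so that $P^{\alpha+\varepsilon}$ is indeed one of the packing measures. Since $\dim_{\mathrm P}(S)=\inf\{\beta\mid P^\beta(S)=0\}\le\alpha<\alpha+\varepsilon$, I would choose an exponent $\beta_0$ with $P^{\beta_0}(S)=0$ and $\beta_0<\alpha+\varepsilon$, and then invoke monotonicity in the exponent (established below) with $\beta=\beta_0<\beta'=\alpha+\varepsilon$ to conclude $P^{\alpha+\varepsilon}(S)=0$. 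Hence the same set~$S$ witnesses that $\mu$ is $(\alpha+\varepsilon)\mathrm{Ps}$, and since $\varepsilon$ was arbitrary in $(0,1-\alpha]$ the claim follows.

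The only genuinely technical ingredient, and the step I expect to be the main (though classical) obstacle, is the monotonicity just used. I would first prove it for the pre-measure $P_0^\beta$: in any $\delta$-packing $(\bar B(x_k;r_k))_k$ each radius obeys $2r_k\le\delta$, so $(2r_k)^{\beta'}\le\delta^{\beta'-\beta}(2r_k)^{\beta}$, which upon summing and taking the supremum gives $P^{\beta'}_\delta(S)\le\delta^{\beta'-\beta}P^{\beta}_\delta(S)$; letting $\delta\downarrow0$ annihilates the right-hand side whenever $P_0^\beta(S)<\infty$, so $P_0^{\beta'}(S)=0$. Passing from $P_0^\beta$ to the outer measure $P^\beta$ is then routine: taking a countable cover $S\subset\bigcup_k S_k$ with $\sum_k P_0^\beta(S_k)<\infty$ (possible since $P^\beta(S)<\infty$) forces each $P_0^\beta(S_k)<\infty$, hence each $P_0^{\beta'}(S_k)=0$, and the infimum-over-covers in Definition~\ref{DPack} yields $P^{\beta'}(S)\le\sum_k P_0^{\beta'}(S_k)=0$. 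Since all of this is part of the standard theory in~\cite{F}, the proof of Proposition~\ref{AHC} itself reduces to the two short arguments above.
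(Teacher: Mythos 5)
Your proof is correct. The paper states Proposition~\ref{AHC} without proof (treating it as a standard consequence of the definitions and of the facts about packing measures cited from~\cite{F}), and your argument --- item~(1) from $P^\alpha(S)=0$ giving $\dim_{\mathrm P}(S)\le\alpha$ directly, and item~(2) from the exponent-monotonicity $P^\beta(S)<\infty,\ \beta'>\beta\Rightarrow P^{\beta'}(S)=0$, proved first for the pre-measure via $(2r_k)^{\beta'}\le\delta^{\beta'-\beta}(2r_k)^{\beta}$ and then lifted to $P^{\beta'}$ through the covering infimum --- is exactly the standard route the authors implicitly rely on.
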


 \begin{proposition}\label{DEC}
Let $T:\dom T\subset \mathcal{H}\rightarrow\mathcal{H}$ be a self-adjoint operator in the Hilbert space~$\mathcal H$, and  $\mu_\psi^T$ the spectral measure of~$T$ associated with the vector~$\psi\in\mathcal H$. Given $\alpha\in(0,1)$, the sets 
\begin{eqnarray*}
\mathcal{H}_{\alpha\mathrm{Pc}}^T:=\left\{\psi\mid \mu_\psi^T\; \mathrm{is} \; \alpha\mathrm{Pc}\right\} \quad  \mathrm{and} \quad \mathcal{H}_{\alpha\mathrm{Ps}}^T:=\left\{\psi\mid \mu_\psi^T\; \mathrm{is}\; \alpha\mathrm{Ps}\right\}
 \end{eqnarray*}  are closed and mutually orthogonal subspaces of~$\mathcal H$, which are invariant under~$T$, and $\mathcal{H}=\mathcal{H}_{\alpha\mathrm{Pc}}^T\oplus\mathcal{H}_{\alpha\mathrm{Ps}}^T$.
 \end{proposition}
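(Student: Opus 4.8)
The plan is to reduce the statement to an abstract measure decomposition and then transport it to $\mathcal{H}$ through the spectral calculus. First I would establish that every finite measure $\mu$ admits a unique splitting $\mu=\mu_{\alpha\mathrm{Pc}}+\mu_{\alpha\mathrm{Ps}}$ into an $\alpha\mathrm{Pc}$ and an $\alpha\mathrm{Ps}$ part. Since for $0<\alpha<1$ the packing premeasure gives a genuine (countably additive) Borel measure $P^\alpha$ that is \emph{not} $\sigma$-finite, I cannot simply invoke a Lebesgue decomposition with respect to $P^\alpha$; instead I would construct a maximal carrier of the singular part by hand. Setting $m:=\sup\{\mu(S)\mid S\text{ Borel},\,P^\alpha(S)=0\}$, I would pick $S_n$ with $P^\alpha(S_n)=0$ and $\mu(S_n)\uparrow m$ and put $S_0:=\bigcup_n S_n$; countable additivity of $P^\alpha$ gives $P^\alpha(S_0)=0$ while $\mu(S_0)=m$. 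Then $\mu_{\alpha\mathrm{Ps}}:=\mu_{;S_0}$ is $\alpha\mathrm{Ps}$ by construction, and $\mu_{\alpha\mathrm{Pc}}:=\mu_{;S_0^c}$ is $\alpha\mathrm{Pc}$: if $P^\alpha(B)=0$ and $\mu(B\cap S_0^c)>0$, then $S_0\cup(B\cap S_0^c)$ would be a $P^\alpha$-null set of $\mu$-measure exceeding $m$, contradicting maximality. Uniqueness follows since a measure that is simultaneously $\alpha\mathrm{Pc}$ and $\alpha\mathrm{Ps}$ must vanish.

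The second ingredient is the standard spectral dictionary, which I would record as follows: $\mu_\psi^T(A)=\|P^T(A)\psi\|^2$; for a Borel function $f$ with $\psi\in\dom f(T)$ one has $\mathrm{d}\mu_{f(T)\psi}^T=|f|^2\,\mathrm{d}\mu_\psi^T$, so in particular $\mu_{f(T)\psi}^T\ll\mu_\psi^T$; the parallelogram-type bound $\mu_{\psi+\phi}^T\le 2(\mu_\psi^T+\mu_\phi^T)$; and the unitary invariance $\mu_{e^{-itT}\psi}^T=\mu_\psi^T$ (because $|e^{-itx}|\equiv 1$). A recurrent elementary remark I would use throughout is that $\mu_\psi^T$ is supported on a Borel set $S$ exactly when $P^T(S^c)\psi=0$, i.e.\ $P^T(S)\psi=\psi$, and that $\mu_\psi^T(B)=0$ exactly when $P^T(B)\psi=0$.

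With these in hand each asserted property follows from a short argument. \textbf{Linearity:} if $\mu_\psi^T,\mu_\phi^T$ are both $\alpha\mathrm{Pc}$, then by the bound (and $\mathrm{d}\mu_{c\psi}^T=|c|^2\,\mathrm{d}\mu_\psi^T$) so is $\mu_{\psi+\phi}^T$; for $\alpha\mathrm{Ps}$, if the two measures live on $P^\alpha$-null sets $S_1,S_2$, then $\mu_{\psi+\phi}^T$ lives on $S_1\cup S_2$, still $P^\alpha$-null. \textbf{Orthogonality:} for $\psi\in\mathcal{H}_{\alpha\mathrm{Pc}}^T$ and $\phi\in\mathcal{H}_{\alpha\mathrm{Ps}}^T$ with $\mu_\phi^T$ carried by a $P^\alpha$-null set $S$, the $\alpha\mathrm{Pc}$ property gives $\mu_\psi^T(S)=0$, hence $P^T(S)\psi=0$ while $P^T(S)\phi=\phi$, so $\langle\psi,\phi\rangle=\langle P^T(S)\psi,\phi\rangle=0$. \textbf{Closedness:} fix $\psi_n\to\psi$ in the relevant subspace; in the $\alpha\mathrm{Pc}$ case $P^\alpha(B)=0$ forces $P^T(B)\psi_n=0$ for every $n$, and continuity of $P^T(B)$ passes this to $\psi$; in the $\alpha\mathrm{Ps}$ case I would take $S:=\bigcup_n S_n$ (a countable union of carriers, still $P^\alpha$-null), note $P^T(S^c)\psi_n=0$ for all $n$, and pass to the limit. \textbf{Invariance} under the evolution is immediate from $\mu_{e^{-itT}\psi}^T=\mu_\psi^T$ (and invariance under $T$ on the domain from $\mu_{T\psi}^T\ll\mu_\psi^T$). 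Finally, for the direct-sum statement I would decompose $\mu_\psi^T$ via Step 1, let $S_0$ be its singular carrier, and set $\psi_s:=P^T(S_0)\psi$, $\psi_c:=P^T(S_0^c)\psi$; then $\mu_{\psi_s}^T=(\mu_\psi^T)_{;S_0}$ and $\mu_{\psi_c}^T=(\mu_\psi^T)_{;S_0^c}$ are precisely the two pieces, so $\psi_s\in\mathcal{H}_{\alpha\mathrm{Ps}}^T$, $\psi_c\in\mathcal{H}_{\alpha\mathrm{Pc}}^T$, $\psi=\psi_c+\psi_s$, and $\psi_c\perp\psi_s$. I expect the only genuine obstacle to be Step 1, precisely because $P^\alpha$ fails to be $\sigma$-finite for $\alpha<1$, so the classical decomposition theorem does not apply verbatim; the maximal-null-set construction is what circumvents this, and everything else is a routine transcription through the spectral calculus.
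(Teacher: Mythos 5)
Your proof is correct and is essentially the argument the paper intends: the paper's proof is a one-line pointer to the Hausdorff-measure version in Last's Theorem~5.1, whose template — decompose the measure into a part carried by a maximal $P^\alpha$-null set and a complementary $\alpha$-continuous part, then transport everything through $\mu_\psi^T(A)=\Vert P^T(A)\psi\Vert^2$ — is exactly what you carry out. Your explicit maximal-carrier construction correctly handles the non-$\sigma$-finiteness of $P^\alpha$, so you have in effect supplied the details the paper omits.
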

 \begin{proof}
 The proof follows closely the proofs of the corresponding statements involving Hausdorff versions in Theorem~5.1 in~\cite{Last}.   
 \end{proof}

\subsection{Generalized upper dimensions}\label{subsectPSE}

\DEFI \label{LD}
 The {\em pointwise upper scaling exponent}  of~$\mu$ at $x\in\mathbb{R}$ is defined as
\[
d_{\mu}^{+}(x):=\limsup_{\varepsilon\rightarrow 0}\frac{\ln \mu(B(x;\varepsilon))}{\ln\varepsilon}\,,
\] if, for every $\var>0$, $\mu(B(x;\var))>0$, and $d_{\mu}^{+}(x):=+\infty$ otherwise. 
\DEFF

The function $x\mapsto d_{\mu}^{+}(x)$ is measurable~\cite{F}. The following results relate packing singularity properties of nonnegative finite Borel measures on~$\mathbb{R}$ with their upper pointwise scaling exponent and  dimensions (see~\cite{Guar} for details).

\begin{proposition}\label{Pacbs}
Let~$\mu$ as before.
\begin{enumerate}
\item If~$\mu$ is $\alpha\mathrm{Ps}$, then $\mu\mathrm{\text{-}}\esssup d_{\mu}^{+}\le\alpha\le 1$;
\item~$\mu$ is $\mathrm{a}\alpha\mathrm{Pds}$ if, and only if, $\mu\mathrm{\text{-}}\esssup d_{\mu}^{+}\le\alpha\le 1$.
\end{enumerate}
\end{proposition}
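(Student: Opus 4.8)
The plan is to reduce the entire statement to a single density lemma relating the pointwise upper scaling exponent $d_\mu^+$ (Definition~\ref{LD}), which is nothing but the upper local dimension $\limsup_{\varepsilon\to0}\ln\mu(B(x;\varepsilon))/\ln\varepsilon$, to the packing dimension of Borel sets, and then to observe that Part~(1) is a corollary of Part~(2). Concretely, I would first record the following packing analogue of Billingsley's Lemma, available in~\cite{Guar,F}: for a Borel set $A$ and $s\ge0$, (a) if $d_\mu^+(x)\le s$ for every $x\in A$, then $\dim_{\mathrm P}(A)\le s$; and (b) if $d_\mu^+(x)\ge s$ for every $x\in A$ and $\mu(A)>0$, then $\dim_{\mathrm P}(A)\ge s$. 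The occurrence of the bound $\le1$ in the statement is automatic: since $\mu$ lives on $\mathbb R$ and $\dim_{\mathrm P}(\mathbb R)=1$, part~(a) applied to any full-measure subset forces $d_\mu^+\le1$ $\mu$-a.e.\ (this can also be seen directly from a Vitali covering estimate showing $\mu\{x:d_\mu^+(x)>1+\delta\}=0$ for every $\delta>0$), so the ``$\le1$'' merely records this together with the standing assumption $\alpha\le1$ of Definition~\ref{SCB}.

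For the equivalence in Part~(2) I would argue both implications from this lemma. For the forward direction, suppose $\mu$ is $\mathrm a\alpha\mathrm{Pds}$, so it is supported on a Borel set $S$ with $\dim_{\mathrm P}(S)\le\alpha$. If $\mu\text{-}\esssup d_\mu^+>\alpha$, there is $\eta>0$ with $\mu(B)>0$ for $B:=\{x:d_\mu^+(x)\ge\alpha+\eta\}$; replacing $B$ by $B\cap S$, which has the same $\mu$-measure and is contained in $S$, and applying part~(b) yields $\dim_{\mathrm P}(B\cap S)\ge\alpha+\eta>\alpha\ge\dim_{\mathrm P}(S)$, contradicting $B\cap S\subset S$. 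Hence $d_\mu^+\le\alpha$ $\mu$-a.e., that is, $\mu\text{-}\esssup d_\mu^+\le\alpha\le1$. For the converse, assuming $\mu\text{-}\esssup d_\mu^+\le\alpha$, set $S:=\{x:d_\mu^+(x)\le\alpha\}$; this $S$ is Borel by the measurability of $d_\mu^+$ and satisfies $\mu(\mathbb R\setminus S)=0$, and part~(a) then gives $\dim_{\mathrm P}(S)\le\alpha$, so $\mu$ is $\mathrm a\alpha\mathrm{Pds}$.

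Part~(1) follows immediately: by Proposition~\ref{AHC}(1), if $\mu$ is $\alpha\mathrm{Ps}$ (with $\alpha\in(0,1]$) it is $\mathrm a\alpha\mathrm{Pds}$, and the forward direction of Part~(2) delivers $\mu\text{-}\esssup d_\mu^+\le\alpha\le1$; the degenerate case $\alpha=0$, which forces $\mu\equiv0$ since $P^0$ is equivalent to counting measure, is trivial. Thus no new ingredient beyond Part~(2) is needed here.

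The main obstacle is the density lemma itself, and within it the lower bound~(b). Whereas the upper bound~(a) is essentially a comparison of the upper local dimension with the upper box/packing dimension of $A$, the inequality $\dim_{\mathrm P}(A)\ge s$ requires a genuine density theorem for the packing pre-measure $P_0^\alpha$: one must pass from pointwise control of the form $\mu(B(x;r))\le r^{s-\delta}$ along suitable radii to a global lower bound on $P_0^\alpha(A)$, and this is delicate precisely because $P^\alpha$ is obtained from $P_0^\alpha$ by a further infimum over countable covers (Definition~\ref{DPack}). I would handle this exactly as in~\cite{Guar}, ultimately invoking the established identity $\dim_{\mathrm P}^+(\mu)=\mu\text{-}\esssup d_\mu^+$ (Definition~\ref{defDimMu}); everything else is bookkeeping with the definitions in Definition~\ref{SCB}.
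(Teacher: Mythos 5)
The paper does not actually prove Proposition~\ref{Pacbs}: it is quoted from Guarneri--Schulz-Baldes~\cite{Guar} with a pointer ``see~\cite{Guar} for details,'' so there is no in-paper argument to match yours against. Your reconstruction is the standard one and is essentially correct: the two halves of the packing analogue of Billingsley's lemma are exactly the right tool, your deduction of Part~(2) from them is sound (using $\mu(B\cap S)=\mu(B)$ and monotonicity of $\dim_{\mathrm P}$ for the forward direction, and the Borel measurability of $d_\mu^+$ for the converse), and reducing Part~(1) to Part~(2) via Proposition~\ref{AHC}(1) is exactly how the statements fit together in this paper. Two small remarks. First, your justification of the automatic bound $d_\mu^+\le 1$ $\mu$-a.e.\ invokes the wrong half of the lemma: part~(a) goes from a pointwise bound to a dimension bound, whereas deducing $\mu\{x:d_\mu^+(x)>1+\delta\}=0$ from $\dim_{\mathrm P}(\mathbb R)=1$ requires part~(b) in contrapositive (or the Vitali/density estimate you mention as an alternative, which does work); this is a harmless misattribution. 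Second, you correctly identify that the real content is the lower bound~(b), and like the paper you ultimately outsource it to~\cite{Guar} (equivalently to the identity of Proposition~\ref{PRo1}); an alternative purely internal route to Part~(2) is to combine Proposition~\ref{PRo1} with Definition~\ref{defDimMu} and the countable stability of packing dimension (intersecting a minimizing sequence of supports), which is essentially what Corollary~\ref{CPRo1} records. Either way, no genuine gap.
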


 \begin{proposition}\label{PRo1}
 Let~$E$ be a Borel subset of~$\mathbb{R}$. Then,
 \begin{equation*}
 \dim_{\mathrm P}^+(\mu_{;E})=\mu_{;E}\mathrm{\text{-}}\esssup d_{\mu_{;E}}^{+}\;.
 \end{equation*}
 \end{proposition}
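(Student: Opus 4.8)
Write $\nu:=\mu_{;E}$. Since $\mu$ is a finite nonnegative Borel measure on $\mathbb{R}$ and $E$ is Borel, $\nu$ is again a finite nonnegative Borel measure on $\mathbb{R}$, so every result of this section applies verbatim to it; in particular Proposition~\ref{Pacbs}. The plan is to avoid estimating packing dimensions of sets directly, and instead to recognize $\dim_{\mathrm P}^+(\nu)$ as the infimal exponent $\alpha$ for which $\nu$ is $\mathrm{a}\alpha\mathrm{Pds}$, and then to read that infimum off from the scaling-exponent characterization in Proposition~\ref{Pacbs}(2).

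First I would establish the auxiliary identity
\[
\dim_{\mathrm P}^+(\nu)=\inf\{\alpha\in[0,1]\mid \nu \text{ is } \mathrm{a}\alpha\mathrm{Pds}\}.
\]
The inequality ``$\le$'' is immediate: if $\nu$ is $\mathrm{a}\alpha\mathrm{Pds}$ then, by Definition~\ref{SCB}, it is supported on a Borel set $S$ with $\dim_{\mathrm P}(S)\le\alpha$, so $\dim_{\mathrm P}^+(\nu)\le\dim_{\mathrm P}(S)\le\alpha$ by Definition~\ref{defDimMu}. For ``$\ge$'', any Borel set $S$ with $\nu(\mathbb{R}\setminus S)=0$ has $\dim_{\mathrm P}(S)\in[0,1]$ and witnesses that $\nu$ is $\mathrm{a}\dim_{\mathrm P}(S)\mathrm{Pds}$; hence the right-hand infimum is $\le\dim_{\mathrm P}(S)$, and taking the infimum over all such $S$ gives the claim.

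It then remains to evaluate the right-hand side. By Proposition~\ref{Pacbs}(2), for $\alpha\in[0,1]$ one has that $\nu$ is $\mathrm{a}\alpha\mathrm{Pds}$ if and only if $\nu\mathrm{\text{-}}\esssup d_\nu^{+}\le\alpha$. Note that this essential supremum is automatically $\le 1$: since $\nu$ is supported on $\mathbb{R}$ with $\dim_{\mathrm P}(\mathbb{R})=1$, it is $\mathrm{a}1\mathrm{Pds}$, and the ``only if'' direction of Proposition~\ref{Pacbs}(2) yields $\nu\mathrm{\text{-}}\esssup d_\nu^{+}\le 1$. Consequently the set $\{\alpha\in[0,1]\mid \nu \text{ is } \mathrm{a}\alpha\mathrm{Pds}\}$ is exactly the interval $[\nu\mathrm{\text{-}}\esssup d_\nu^{+},\,1]$, whose infimum equals $\nu\mathrm{\text{-}}\esssup d_\nu^{+}$. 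Combined with the displayed identity, this gives $\dim_{\mathrm P}^+(\nu)=\nu\mathrm{\text{-}}\esssup d_\nu^{+}$, which is the assertion since $\nu=\mu_{;E}$.

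I do not expect a serious obstacle here, as all the analytic content is already packaged in Proposition~\ref{Pacbs}(2); the only delicate point is purely formal, namely that the infimum defining $\dim_{\mathrm P}^+$ in Definition~\ref{defDimMu} need not be attained. This is sidestepped above by identifying the admissible set of exponents with the closed interval $[\nu\mathrm{\text{-}}\esssup d_\nu^{+},\,1]$ and reading off its infimum directly, so that no limiting argument over ever-finer supports of $\nu$ is needed.
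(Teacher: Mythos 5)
The paper does not actually prove Proposition~\ref{PRo1}: it is quoted, together with Proposition~\ref{Pacbs}, as a known result with a pointer to~\cite{Guar}, so there is no in-paper argument to compare yours against. Your deduction is nevertheless correct within the paper's framework. The identity $\dim_{\mathrm P}^+(\nu)=\inf\{\alpha\in[0,1]\mid\nu\ \mathrm{is}\ \mathrm{a}\alpha\mathrm{Pds}\}$ follows immediately from Definitions~\ref{defDimMu} and~\ref{SCB} exactly as you argue, and combining it with Proposition~\ref{Pacbs}(2) gives the stated equality. Be aware, though, of what this buys: you have shown that Proposition~\ref{PRo1} is formally interderivable with item~2 of Proposition~\ref{Pacbs} (the paper records the reverse implication as Corollary~\ref{CPRo1}), so all the genuine analytic content --- the covering/density arguments relating pointwise scaling exponents to packing dimensions of supports --- remains packaged inside the unproved Proposition~\ref{Pacbs}(2); a referee asking for a self-contained proof would not be satisfied. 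Two small points to tighten: your identification of the admissible set of exponents with the closed interval $[\nu\mathrm{\text{-}}\esssup d_\nu^{+},\,1]$ implicitly uses $d_\nu^{+}\ge 0$ $\nu$-a.e.\ (true for finite measures, since $\nu(B(x;\varepsilon))$ is bounded above and $\ln\varepsilon<0$ for small $\varepsilon$, but worth saying); and the degenerate case $\mu(E)=0$, where $\dim_{\mathrm P}^+(\mu_{;E})=0$ by the convention the paper uses in the proof of Theorem~\ref{prop1uPd} while the $\nu$-essential supremum over a null measure is $-\infty$, should be excluded or treated separately.
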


Item~2.\ in Proposition~\ref{Pacbs} can be restated in the following way:

\begin{corollary}
\label{CPRo1}
Let~$E$ be a Borel subset of~$\mathbb{R}$. $\mu_{;E}$ is $\mathrm{a}\alpha\mathrm{Pds}$ if, and only if, $\dim_{\mathrm P}^+(\mu_{;E})\le\alpha$.
\end{corollary}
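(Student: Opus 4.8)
The plan is to obtain the statement as an immediate consequence of combining Item~2 of Proposition~\ref{Pacbs} with Proposition~\ref{PRo1}, the key observation being that the restriction $\mu_{;E}$ is itself a finite nonnegative Borel measure on~$\mathbb{R}$, so both previous results apply verbatim with~$\mu_{;E}$ in the role of~$\mu$.

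First I would apply Item~2 of Proposition~\ref{Pacbs} to the measure~$\mu_{;E}$. This yields that $\mu_{;E}$ is $\mathrm{a}\alpha\mathrm{Pds}$ if, and only if,
\[
\mu_{;E}\mathrm{\text{-}}\esssup d_{\mu_{;E}}^{+}\le\alpha\le 1.
\]
Next I would invoke Proposition~\ref{PRo1}, which identifies the left-hand quantity with the upper packing dimension, namely $\dim_{\mathrm P}^+(\mu_{;E})=\mu_{;E}\mathrm{\text{-}}\esssup d_{\mu_{;E}}^{+}$. Substituting this equality into the displayed condition rewrites the characterization as $\dim_{\mathrm P}^+(\mu_{;E})\le\alpha\le 1$.

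Finally, since we work throughout under the standing convention $\alpha\in[0,1]$ fixed in Definition~\ref{SCB}, the upper constraint $\alpha\le 1$ is automatic and may be dropped; equivalently, one notes that the packing dimension of any subset of~$\mathbb{R}$, and hence $\dim_{\mathrm P}^+(\mu_{;E})$, never exceeds~$1$, so the inequality $\dim_{\mathrm P}^+(\mu_{;E})\le\alpha$ already carries the full content. What remains is precisely the asserted equivalence: $\mu_{;E}$ is $\mathrm{a}\alpha\mathrm{Pds}$ if, and only if, $\dim_{\mathrm P}^+(\mu_{;E})\le\alpha$. There is essentially no obstacle in this argument; the only point deserving a moment's care is the harmless elimination of the constraint $\alpha\le 1$, which is justified by the conventions just recalled.
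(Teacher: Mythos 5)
Your argument is correct and is exactly the route the paper intends: the corollary is stated as a restatement of Item~2 of Proposition~\ref{Pacbs} obtained by substituting the identity of Proposition~\ref{PRo1}, applied to the measure $\mu_{;E}$. The remark about dropping the constraint $\alpha\le 1$ is a harmless and properly justified detail.
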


Now we recall the definition of generalized upper dimensions of positive Borel measures and how they are connected to the upper packing dimensions.

\DEFI\label{DD2}
Let~$\mu$ be a positive Borel measure on~$\mathbb{R}$. The {\em upper generalized dimensions} of~$\mu$ are defined, for $q\neq 1$, as
\[ D_q^{+}(\mu):=\limsup_{\varepsilon\rightarrow 0}\frac{\ln\left[\int[\mu(B(x;\varepsilon))]^{q-1}\,\mathrm{d}\mu(x)\right]}{(q-1)\ln\varepsilon}\;,
\]
with integrals taken on~$\supp\mu$.
\DEFF

For all $q<1<s$,  Proposition~4.1 in~\cite{BGT3}  gives
\begin{equation}\label{D2+}
D_q^{+}(\mu)\ge\dim_{\mathrm P}^+(\mu)\ge D_s^{+}(\mu)\,. 
\end{equation}   This will be used in Section~\ref{sectHW}, particularly with~$q=1/2$.

\section{Dynamical characterization of $\mathcal{H}_{\alpha\text{Ps}}^T$}\label{sectHW}
\zerarcounters

 In Definition~\ref{HoS}, we introduce a class of special measures for which a singular version-like of Strichartz's Theorem (Theorem~\ref{STRI}) will be deduced (see Lemma~\ref{Stri}). The  arguments there will be used to prove the main result of this section, that is, Theorem~\ref{l21}. We assume, in what follows, that~$0\ne T$ represents a bounded self-adjoint operator  on~$\mathcal{H}$.

\begin{definition}\label{HoS}
Let $\alpha\in[0,1]$ and~$\mu$ be a positive Borel measure on~$\mathbb{R}$. We say that~$\mu$ is \it{uniformly $\alpha$-H\"older singular} (U$\alpha$HS) if there exist positive constants~$C$ and~$r_0$, with $r_0<1$, such that, for all $0<r<r_0$ and for~$\mu$ a.e.~$x$, $\mu(B(x;r))\ge Cr^\alpha$.
\end{definition}

Besides the application to limit-periodic operators ahead, the next results have interest on their own (as well as the results in Section~\ref{GenQB}). Next a description of $\alpha$-packing singular measures in terms of U$\alpha$HS ones.

\begin{theorem} Let~$\mu$ be a positive Borel measure on~$\mathbb{R}$ and $\alpha\in(0,1)$. If~$\mu$ is $\alpha$Ps, then, for every $0<\var\le 1-\alpha$ and every $\delta>0$, there exist a Borel set~$S_\delta=S\subset\mathbb{R}$ such that~$\mu(S^c)<\delta$ and  positive constants~$C$ and~$r_0<1$ such that, for each $0<r<r_0$ and for each~$x\in S$, $\mu(B(x;r))\ge Cr^\alpha$. Conversely, if, for every $\delta>0$, there exist mutually singular Borel measures $\mu_1^\delta$ and $\mu_2^\delta$ such that $\mu=\mu_1^\delta+\mu_2^\delta$, with $\mu_1^\delta$ U$\alpha$HS and $\mu_2^\delta(\mathbb{R})<\delta$, then, for every $0<\var\le 1-\alpha$,~$\mu$ is ($\alpha+\var$)Ps. \label{THoS}
\end{theorem}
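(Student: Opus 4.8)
The plan is to route both implications through the pointwise upper scaling exponent $d_\mu^+$ of Definition~\ref{LD}, together with the dictionary between packing singularity and $\esssup$-bounds on $d_\mu^+$ furnished by Propositions~\ref{Pacbs} and~\ref{AHC}. The direct implication converts the abstract hypothesis that $\mu$ is $\alpha\mathrm{Ps}$ into a pointwise lower bound $\mu(B(x;r))\ge r^{\alpha+\varepsilon}$ valid below an $x$-dependent scale, and then makes that scale uniform on a set of almost full measure by an Egorov-type argument; the converse reads the same chain of equivalences backwards.

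For the direct implication I would first invoke Proposition~\ref{Pacbs}(1): since $\mu$ is $\alpha\mathrm{Ps}$ one has $d_\mu^+(x)\le\alpha$ for $\mu$-a.e.\ $x$. Unwinding the $\limsup$ in Definition~\ref{LD} and recalling that $\ln r<0$, for the prescribed $\varepsilon$ and each such $x$ there is a threshold $\rho(x)>0$ with $\mu(B(x;r))\ge r^{\alpha+\varepsilon}$ for all $0<r<\rho(x)$. To remove the dependence of $\rho$ on $x$ I would pass to the nested Borel sets
\[
S_n:=\bigl\{x\mid \mu(B(x;r))\ge r^{\alpha+\varepsilon}\ \text{for all}\ 0<r<1/n\bigr\},
\]
their measurability following from $x\mapsto\mu(B(x;r))$ being Borel and from the monotonicity of $r\mapsto\mu(B(x;r))$, which lets one test the defining condition on a countable set of radii. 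Since $\bigcup_n S_n$ contains the $\mu$-full set $\{x\mid d_\mu^+(x)\le\alpha\}$, continuity of the finite measure $\mu$ from below yields $\mu(S_n^c)\to0$; choosing $n$ with $\mu(S_n^c)<\delta$ and setting $S:=S_n$, $r_0:=1/n$ and $C:=1$ produces the required set and constants, with the lower bound holding at exponent $\alpha+\varepsilon$.

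For the converse, fix $\delta>0$ and take the decomposition $\mu=\mu_1^\delta+\mu_2^\delta$ from the hypothesis. As $\mu\ge\mu_1^\delta$ and $\mu_1^\delta$ is U$\alpha$HS, for $\mu_1^\delta$-a.e.\ $x$ one has $\mu(B(x;r))\ge\mu_1^\delta(B(x;r))\ge C_\delta r^\alpha$ for all small $r$, hence $d_\mu^+(x)\le\alpha$ on a set $A_\delta$ of full $\mu_1^\delta$-measure. Because $\mu_2^\delta(\mathbb R)<\delta$, this set carries $\mu$-mass $\mu(A_\delta)\ge\mu_1^\delta(\mathbb R)>\mu(\mathbb R)-\delta$; letting $\delta\downarrow0$ gives $d_\mu^+(x)\le\alpha$ for $\mu$-a.e.\ $x$, that is, $\mu\text{-}\esssup d_\mu^+\le\alpha$. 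Proposition~\ref{Pacbs}(2) then shows $\mu$ is $\mathrm a\alpha\mathrm{Pds}$, and Proposition~\ref{AHC}(2) upgrades this to $(\alpha+\varepsilon)\mathrm{Ps}$ for every $0<\varepsilon\le1-\alpha$, as desired.

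The main obstacle is the uniformization in the direct implication. The bound $d_\mu^+(x)\le\alpha$ is a $\limsup$ statement and therefore controls $\mu(B(x;r))$ only below the $x$-dependent scale $\rho(x)$; extracting a single radius $r_0$ valid on a set of $\mu$-measure exceeding $\mu(\mathbb R)-\delta$ is precisely the content of the passage through the sets $S_n$, and verifying their Borel measurability is the one point requiring care. It is also this $\limsup$ nature that forces the exponent $\alpha+\varepsilon$ rather than $\alpha$: no uniform bound $\mu(B(x;r))\ge Cr^\alpha$ can be deduced from $d_\mu^+=\alpha$ alone, as the profile $\mu(B(x;r))\sim r^\alpha/\ln(1/r)$ already illustrates. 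Everything else reduces to direct applications of Propositions~\ref{Pacbs} and~\ref{AHC}.
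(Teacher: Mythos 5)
Your proof is correct and follows essentially the same route as the paper: both implications are reduced, via Propositions~\ref{Pacbs} and~\ref{AHC}, to the bound $\mu\text{-}\esssup d_\mu^+\le\alpha$, with the resulting exponent $\alpha+\varepsilon$ (rather than $\alpha$) in the direct part matching what the paper's own proof actually delivers. The only differences are cosmetic: you replace the paper's appeal to Egoroff's theorem by an explicit exhaustion through the nested Borel sets $S_n$ together with continuity from below, and you prove the converse directly where the paper argues by contradiction.
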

\begin{proof}
Suppose that, for every $\delta>0$, $\mu=\mu_1^\delta+\mu_2^\delta$, with $\mu_1^\delta$ and $\mu_2^\delta$ satisfying the properties in the statement of the theorem. We must show, for every $0<\var\le1-\alpha$, that~$\mu$ is ($\alpha+\var$)Ps; by Propositions~\ref{AHC} and~\ref{Pacbs}, this is equivalent to showing that $\mu\text{-}\esssup d_\mu^+\le\alpha$. 

Let us assume, nonetheless, that $\mu\text{-}\esssup d_\mu^+>\alpha$. Thus, there is a Borel set, say $B$, of positive~$\mu$-measure such that $d_\mu^+(x)>\alpha$ for every $x\in B$. Fix $0<\zeta<\mu(B)$. By  hypothesis, there is a Borel set $E$ (which may depend on $\zeta$) such that~$\mu$ can be decomposed as $\mu=\mu_1^\zeta+\mu_2^\zeta$, with $\mu_1^\zeta(\cdot):=\mu(E\cap\cdot)$ U$\alpha$HS and $\mu_2^\zeta(\cdot):=\mu(E^c\cap\cdot)$, with $\mu_2^\zeta(\mathbb{R})<\zeta$.

By Definition~\ref{HoS}, there are constants $C>0$ and $0<r_0<1$ such that, for every $0<r<r_0$ and every $x\in E\setminus D$ ($D$ a set of zero~$\mu_1^\zeta$-measure), $\mu_1^\zeta(B(x;r))\ge Cr^{\alpha}$. Now, since $\ln\mu(\cdot)\ge \ln\mu_1^\zeta(\cdot)$, 
\[
d_\mu^+(x)=\limsup_{r\downarrow 0}\frac{\ln\mu(B(x;r))}{\ln r}\le\limsup_{r\downarrow 0}\frac{\ln\mu_1^\zeta(B(x;r))}{\ln   r}\le\limsup_{r\downarrow 0}\frac{\ln C}{\ln r}+\alpha=\alpha,
\] and $d_\mu^+(x)\le\alpha$ for every~$x\in E\setminus D$.  But then, $(E\setminus D)^c\supset B$, which implies that $\zeta>\mu_2^\zeta(E^c\cup D)=\mu_2^\zeta(E^c\cup D)+\mu_1^\zeta(E^c\cup D)=\mu(E^c\cup D)\ge\mu(B)$, a contradiction with $\mu(B)>\zeta$. Thus, $\mu\text{-}\esssup d_\mu^+\le\alpha$, and we are done.

Conversely, if~$\mu$ is $\alpha$Ps, then, by Proposition~\ref{Pacbs}, $\mu\text{-}\esssup d_\mu^+\le\alpha$; that is, 
\[
\limsup_{r\downarrow 0}\frac{\ln\mu(B(x;r))}{\ln r}\le\alpha
\] for~$\mu$ a.e.~$x$. Since the sequence $(f_r(x))$ of measurable functions \[
f_r(x):=\sup_{r^\prime\le r}\frac{\ln\mu(B(x;r^\prime))}{\ln r^\prime}
\] converges to $d_\mu^+(x)$, Egoroff's Theorem implies that given an arbitrary $\delta>0$, there is a Borel set~$S$ such that $\mu(S^c)<\delta$ and $f_r(x)$ converges uniformly on $S$ to $d_\mu^+(x)$, as $r\downarrow 0$. But then, given an arbitrary $0<\var\le1-\alpha$, there is a $0<r_0<1$ such that, for every $0<r<r_0$ and all~$x\in S$, $\ln\mu(B(x;r))/\ln r<\alpha+\var$; that is, $\mu(B(x;r))>r^{\alpha+\var}$, for  all~$x\in S$. 
\end{proof}

Now we introduce another quantity that has proven useful. For a finite and positive Borel measure~$\mu$ on~$\mathbb R$ and every $t\in\mathbb{R}$, write  
\begin{equation}\label{DefL} \Xi_{\mu}(t):=\int {\mathrm d}\mu(x)\left(\int {\mathrm d}\mu(y)\,e^{-(x-y)^2t^2/4}\right)^{-1/2}.
\end{equation} If the measure~$\mu$ is a spectral measure $\mu_\psi^T$, we denote $ \Xi_{\mu}(t)$ by~$\Xi_{\psi}^T(t)$.

Recall that~$\mu$ is uniformly $\alpha$-H\"older continuous~\cite{Last} if there  are positive and finite constants~$C$ and~$r_0$, so that for each $0<r<r_0$ and for~$\mu$ a.e.~$x$, $\mu(B(x;r))\le C r^\alpha$. The following result is known as Strichartz's Theorem (it is, in fact, an adapted version of the Theorem presented in~\cite{Last} for~$f\equiv 1\in \LL^2(\mathbb{R};{\mathrm d}\mu)$; see also~\cite{strichartz} for the original result).

\begin{theorem}\label{STRI}
Let~$\mu$ be a finite uniformly $\alpha$-H\"older continuous measure, and for each~$s>0$, denote
\[
\widehat{\mu}(s):=\int e^{-isx}\,{\mathrm d}\mu(x).
\]
Then, there exist constants~$\tilde{D}$ and~$t_0>0$, depending only on~$\mu$, such that, for any~$t>t_0$,
\begin{equation}\label{eq:returnProb}
\frac{1}{t}\int_0^t \vert\widehat{\mu}(s)\vert^2\, {\mathrm d}s\le \tilde{D} t^{-\alpha}.
\end{equation} 
\end{theorem}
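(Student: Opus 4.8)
The plan is to prove the bound by replacing the sharp time cutoff on $[0,t]$ with a Gaussian weight, after which the time integral can be carried out explicitly and the problem reduces to controlling a double $\mu$-integral by means of the uniform H\"older bound.

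First I would note that for $s\in[0,t]$ one has $0\le s^2/t^2\le 1$, hence $1\le e\,e^{-s^2/t^2}$ there; since both $|\widehat\mu(s)|^2$ (because $\widehat\mu(-s)=\overline{\widehat\mu(s)}$) and the Gaussian are even in $s$,
\[
\frac1t\int_0^t |\widehat\mu(s)|^2\,\mathrm{d}s\le \frac{e}{t}\int_0^\infty e^{-s^2/t^2}|\widehat\mu(s)|^2\,\mathrm{d}s=\frac{e}{2t}\int_{-\infty}^\infty e^{-s^2/t^2}|\widehat\mu(s)|^2\,\mathrm{d}s.
\]
This trades the Fej\'er-type kernel produced by a sharp cutoff for a Gaussian, which is essentially its own Fourier transform. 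Expanding $|\widehat\mu(s)|^2=\int\!\!\int e^{-is(x-y)}\,\mathrm{d}\mu(x)\,\mathrm{d}\mu(y)$, applying Fubini (legitimate since $\mu$ is finite and the Gaussian is integrable), and using $\int_{-\infty}^\infty e^{-s^2/t^2}e^{-is(x-y)}\,\mathrm{d}s=\sqrt\pi\,t\,e^{-(x-y)^2t^2/4}$, the right-hand side collapses to
\[
\frac{e\sqrt\pi}{2}\int \mathrm{d}\mu(x)\int \mathrm{d}\mu(y)\,e^{-(x-y)^2t^2/4}=:\frac{e\sqrt\pi}{2}\,I(t),
\]
which involves the very same inner Gaussian kernel $\int\mathrm{d}\mu(y)\,e^{-(x-y)^2t^2/4}$ appearing in~\eqref{DefL} (there raised to the power $-1/2$). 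It then suffices to show $I(t)\le C' t^{-\alpha}$ for large $t$.

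The heart of the argument is to estimate the inner integral $\int e^{-(x-y)^2t^2/4}\,\mathrm{d}\mu(y)$ uniformly for $\mu$-a.e.\ $x$, via the layer-cake representation. Writing $g_x(y):=e^{-(x-y)^2t^2/4}$, one has $\{g_x>\lambda\}=B(x;r(\lambda))$ with $r(\lambda)=(2/t)\sqrt{-\ln\lambda}$, so that $\int g_x\,\mathrm{d}\mu=\int_0^1\mu(B(x;r(\lambda)))\,\mathrm{d}\lambda$. I would split this integral at the threshold $\lambda_0=e^{-(r_0t/2)^2}$ at which $r(\lambda_0)=r_0$: for $\lambda>\lambda_0$ the radius is admissible and uniform $\alpha$-H\"older continuity gives $\mu(B(x;r(\lambda)))\le C(2/t)^\alpha(-\ln\lambda)^{\alpha/2}$, whose integral over $(0,1)$ equals $C(2/t)^\alpha\,\Gamma(1+\alpha/2)$ after the substitution $u=-\ln\lambda$; for $\lambda\le\lambda_0$ I would use the crude bound $\mu(B(\cdot))\le\mu(\mathbb R)$, contributing at most $\mu(\mathbb R)\,e^{-(r_0t/2)^2}$, which is exponentially small in $t$. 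Integrating over $x$ against $\mu$ yields $I(t)\le\mu(\mathbb R)\big[C\,2^\alpha\Gamma(1+\alpha/2)\,t^{-\alpha}+\mu(\mathbb R)\,e^{-(r_0t/2)^2}\big]$; choosing $t_0$ so large that the exponential term is dominated by the power term for $t>t_0$ gives the stated estimate with $\tilde D$ depending only on $\alpha$, $C$, $r_0$ and $\mu(\mathbb R)$.

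The main obstacle I anticipate is precisely this inner-integral estimate: the H\"older bound is only \emph{local} (valid for $r<r_0$), so one must carefully separate the regimes $r(\lambda)<r_0$ and $r(\lambda)\ge r_0$ and confirm that the large-radius contribution is genuinely negligible for large $t$. One must also keep in mind that the H\"older inequality holds only for $\mu$-a.e.\ center $x$, which is harmless here because the final step integrates $x$ against $\mu$. By contrast, the Gaussian-smoothing and Fubini steps are routine once the Fourier transform of a Gaussian is recalled.
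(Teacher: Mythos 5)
Your proof is correct, and it follows essentially the same route as the argument the paper attributes to Last~\cite{Last}: the paper does not reprove Theorem~\ref{STRI} but records that the proof reduces to the Gaussian double-integral bound~\eqref{eqAHc}, which is exactly your quantity $I(t)$, obtained via the same Gaussian time-smoothing and Fubini steps. Your only (harmless) variation is estimating the inner integral $\int e^{-(x-y)^2t^2/4}\,\mathrm{d}\mu(y)$ by the layer-cake formula rather than by the annulus decomposition $n/t\le|x-y|<(n+1)/t$ used in \cite{Last} and mirrored in the paper's Lemma~\ref{Stri}.
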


 The proof of Theorem~\ref{STRI} in~\cite{Last}, after some preparation, essentially consists of showing that there exist constants $\tilde{D}$ and $t_0>0$ so that
\begin{equation}\label{eqAHc}
\int {\mathrm d}\mu(x)\left(\int {\mathrm d}\mu(y)\,e^{-(x-y)^2t^2/4}\right)\le \tilde D/t^{\alpha},\qquad t>t_0.
\end{equation}
In such proofs, in case $\mu=\mu^T_\psi$, the parameter $``t"$ comes from the time evolution $e^{-itT}\psi$ and the left hand side of~\eqref{eq:returnProb} coincides with the average return probability $\langle p_\psi\rangle(t)$, so  one may look at~$\Xi_\mu(t)$ as a dynamical quantity. Equation~\eqref{eqAHc}, related to  Hausdorff continuity, was our main motivation to introduce~$\Xi_\mu(t)$, and we have got the following singular version of this result; although simple, it will be very useful ahead.

\begin{lemma}\label{Stri}
Let~$\mu$ be a finite positive Borel measure on~$\mathbb R$ and U$\alpha$HS for some $\alpha\in[0,1]$. Then, there exist finite constants $D>0$ and $t_0>1$  such that, for every  $t>t_0$, 
\[ \Xi_{\mu}(t)\le\mu(\mathbb{R})\, D\, t^{\alpha/2}\,.
\] In case of spectral measures $\mu_\psi^T$, one has  $\Xi^T_{\psi}(t)\le\Vert\psi\Vert^2\, D\, t^{\alpha/2}.
$\end{lemma}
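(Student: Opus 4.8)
The plan is to bound $\Xi_\mu(t)$ by estimating from below the inner integral
\[
I(x,t):=\int {\mathrm d}\mu(y)\,e^{-(x-y)^2t^2/4},
\]
which is a Gaussian-weighted average of $\mu$ centered at $x$, and then to take its inverse square root and integrate in $x$. The key observation is that the U$\alpha$HS hypothesis gives, for $\mu$-a.e.\ $x$ and all small $r$, the lower bound $\mu(B(x;r))\ge Cr^\alpha$, so mass accumulates near $x$ at a controlled rate. I would first restrict the $y$-integration to the ball $B(x;1/t)$: since $e^{-(x-y)^2t^2/4}\ge e^{-1/4}$ there, one gets
\[
I(x,t)\ge e^{-1/4}\,\mu\!\left(B(x;1/t)\right)\ge e^{-1/4}\,C\,t^{-\alpha}
\]
for all $t>t_0:=1/r_0>1$ and for $\mu$-a.e.\ $x$. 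This is the crucial step and the place where the U$\alpha$HS property is used; everything else is bookkeeping.

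Having the lower bound on $I(x,t)$, I would take the $-1/2$ power, which reverses the inequality:
\[
I(x,t)^{-1/2}\le \left(e^{-1/4}\,C\right)^{-1/2} t^{\alpha/2}=e^{1/8}\,C^{-1/2}\,t^{\alpha/2},
\]
valid for $\mu$-a.e.\ $x$ and all $t>t_0$. Integrating this uniform-in-$x$ bound against ${\mathrm d}\mu(x)$ immediately yields
\[
\Xi_\mu(t)=\int {\mathrm d}\mu(x)\,I(x,t)^{-1/2}\le \mu(\mathbb{R})\,e^{1/8}\,C^{-1/2}\,t^{\alpha/2},
\]
so that the claimed estimate holds with $D:=e^{1/8}\,C^{-1/2}$, a finite positive constant depending only on $\mu$ (through $C$ and $r_0$). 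The spectral-measure version follows at once by specializing $\mu=\mu_\psi^T$ and recalling $\mu_\psi^T(\mathbb{R})=\Vert\psi\Vert^2$.

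I expect no serious obstacle here; the lemma is genuinely elementary once the right lower bound on the inner integral is isolated. The only points requiring mild care are: ensuring that the exceptional $\mu$-null set from the U$\alpha$HS definition does not affect the outer integral (it does not, since it has $\mu$-measure zero and we integrate ${\mathrm d}\mu$); and checking the range of validity $t>t_0$, which comes precisely from needing $1/t<r_0$ so that the U$\alpha$HS lower bound applies to the radius $r=1/t$. One could alternatively integrate over $B(x;\lambda/t)$ for a constant $\lambda$ and optimize, but this only changes the constant $D$ and is unnecessary for the stated conclusion.
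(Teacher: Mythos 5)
Your proof is correct and follows essentially the same route as the paper: both bound the inner Gaussian integral from below via the U$\alpha$HS estimate at radius $1/t$ (with $t_0=1/r_0$), then take the $-1/2$ power and integrate in $x$. The paper phrases the lower bound through a decomposition into annuli $n/t\le|x-y|<(n+1)/t$, but the innermost ball you isolate is what carries the argument, and your handling of the $\mu$-null exceptional set is the right way to justify the ``for every $x$'' step that the paper glosses over.
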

\begin{proof} Since~$\mu$ is U$\alpha$HS, there are positive constants~$C$ and~$r_0$, with $r_0<1$, such that, for every $0<r<r_0$ and~$\mu$ a.e.~$y$, $\mu(B(y;r))\ge Cr^\alpha$. Thus, by taking $t_0\equiv 1/r_0$, it follows, for every $t>t_0$ and every~$x\in\mathbb{R}$, that 
\begin{eqnarray}
\nonumber \int {\mathrm d}\mu(y)\,e^{-(x-y)^2t^2/4}&=&\sum_{n\ge 0}\int_{n/t\le|x-y|<(n+1)/t}{\mathrm d}\mu(y)\,e^{-(x-y)^2t^2/4}\\
&\ge& 2Ct^{-\alpha}\sum_{n\ge 0}\,e^{-(n+1)^2/4}\,.\label{eqXi}
\end{eqnarray}

Finally, by letting $D\equiv \left(2C\sum_{n\ge 0}e^{-(n+1)^2/4}\right)^{-1/2}$, we obtain 
\[\Xi_{\mu}(t)=\int {\mathrm d}\mu(x)\left(\int {\mathrm d}\mu(y)\,e^{-(x-y)^2t^2/4}\right)^{-1/2}\le\mu(\mathbb{R})\, D\,t^{\alpha/2}\,.
\]
In case of~$\mu=\mu_\psi^T$, just recall that $\mu_\psi^T(\mathbb R)=\|\psi\|^2$.  
 \end{proof}

\begin{proposition}\label{Pa1}
Let~$T$ be a bounded self-adjoint operator on~$\mathcal H$ and $\alpha\in(0,1)$. Then, for every $0<\var\le1-\alpha$, 
\[\mathcal{H}^T_{\alpha\mathrm{Ps}}\setminus\{0\}\subset\{\psi\mid\limsup_{t\rightarrow\infty}t^{-(\alpha+\var)/2}\,\Xi_{\psi}^T(t)<\infty\}.
\] 
\end{proposition}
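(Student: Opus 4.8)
The plan is to convert the almost‑everywhere lower regularity encoded in the $\alpha$Ps condition into a lower bound for the inner integral defining $\Xi_\psi^T$, and then to run the elementary estimate behind Lemma~\ref{Stri}. Fix $\psi\in\mathcal H^T_{\alpha\mathrm{Ps}}\setminus\{0\}$ and write $\mu:=\mu_\psi^T$, a finite measure with $\mu(\mathbb R)=\|\psi\|^2>0$ which is $\alpha$Ps. By Proposition~\ref{Pacbs} (through the chain $\alpha\mathrm{Ps}\Rightarrow\mathrm a\alpha\mathrm{Pds}$ of Proposition~\ref{AHC}) one has $\mu\text{-}\esssup d_\mu^{+}\le\alpha$; hence, for $\mu$-a.e.\ $x$, $\mu(B(x;r))\ge r^{\alpha+\varepsilon}$ for all sufficiently small $r$. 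The goal is to turn this into $\limsup_{t\to\infty}t^{-(\alpha+\varepsilon)/2}\,\Xi_\psi^T(t)<\infty$.

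First I would uniformize the a.e.\ estimate exactly as in the proof of Theorem~\ref{THoS}: applying Egoroff's theorem to the functions $f_r$ that increase to $d_\mu^{+}$, I obtain, for any prescribed $\delta>0$, a Borel set $S$ with $\mu(S^c)<\delta$ together with a radius $0<r_0<1$ such that $\mu(B(x;r))\ge r^{\alpha+\varepsilon}$ for every $x\in S$ and every $0<r<r_0$. On $S$, retaining only the central ball (the $n=0$ term in~\eqref{eqXi}), one gets $\int e^{-(x-y)^2t^2/4}\,\mathrm d\mu(y)\ge e^{-1/4}\mu(B(x;1/t))\ge e^{-1/4}t^{-(\alpha+\varepsilon)}$ for all $t>1/r_0$ and all $x\in S$. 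Consequently
\[
\int_S\Big(\int e^{-(x-y)^2t^2/4}\,\mathrm d\mu(y)\Big)^{-1/2}\mathrm d\mu(x)\le e^{1/8}\,\mu(\mathbb R)\,t^{(\alpha+\varepsilon)/2},\qquad t>1/r_0,
\]
so the portion of $\Xi_\psi^T(t)$ coming from $x\in S$ is already of the desired order $t^{(\alpha+\varepsilon)/2}$ (and, using an intermediate exponent $\alpha+\varepsilon/2<\alpha+\varepsilon$ in the Egoroff step, even of strictly smaller order).

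The main obstacle is the complementary integral $\int_{S^c}(\cdots)^{-1/2}\,\mathrm d\mu$. Here no uniform lower bound on the inner integral is available, and although $\mu_{;S^c}$ is again $\alpha$Ps with small mass and the normalized integrand $t^{-(\alpha+\varepsilon)/2}(\cdots)^{-1/2}$ tends to $0$ for $\mu$-a.e.\ $x\in S^c$, this convergence is \emph{not} dominated: points of $S^c$ that happen to be isolated at the scale $1/t$ momentarily inflate the integrand, so a naive dominated/reverse‑Fatou passage to the limit fails (one sees this already for a purely atomic $\alpha$Ps measure). I would therefore iterate the Egoroff construction on $S^c$ at a decreasing sequence of radii $r_0>r_1>r_2>\cdots$, producing a partition $S=E_0,E_1,E_2,\dots$ of a set of full $\mu$-measure on which the bound $\mu(B(x;r))\ge r^{\alpha+\varepsilon}$ holds uniformly with thresholds $\tau_k=1/r_k$, and then split, for each $t$, the sum $\sum_k\int_{E_k}(\cdots)^{-1/2}\,\mathrm d\mu$ into the already‑activated pieces ($\tau_k\le t$, whose total is $\le e^{1/8}\mu(\mathbb R)t^{(\alpha+\varepsilon)/2}$) and the tail ($\tau_k>t$).

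The delicate point — and the step I expect to be hardest — is to show that the tail is $o(t^{(\alpha+\varepsilon)/2})$. Bounding the inner integral on $E_k$ by the worst‑case value $r_k^{\alpha+\varepsilon}$ is too lossy (the resulting series may diverge); instead one must keep the genuine local mass $\mu(B(x;1/t))$, which on $S^c$ typically captures substantial mass from neighbouring points at distance $O(1/t)$, and arrange the exceptional masses $\mu(E_k)$ (freely adjustable through the $\delta$'s in Egoroff) to decay fast enough relative to $\tau_k$. Once the tail is controlled, combining it with the estimate of the second paragraph yields $\limsup_{t\to\infty}t^{-(\alpha+\varepsilon)/2}\,\Xi_\psi^T(t)<\infty$ (indeed $=0$), which is precisely the asserted inclusion.
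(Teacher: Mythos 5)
Your first paragraph is, in substance, the paper's own argument: Egoroff applied to $f_r\to d_\mu^+$ produces a Borel set $S$ with $\mu(S^c)<\delta$ and a threshold $r_0$ below which $\mu(B(x;r))\ge r^{\alpha+\var}$ uniformly on $S$, and retaining the central ball in the annular decomposition of $\int \mathrm{d}\mu(y)\,e^{-(x-y)^2t^2/4}$ bounds the part of $\Xi_\psi^T(t)$ coming from $x\in S$ by $\mathrm{const}\cdot\mu(\mathbb R)\,t^{(\alpha+\var)/2}$. The paper phrases this by replacing the inner integral with $\int_S$ and invoking ``the same reasoning as in Lemma~\ref{Stri}'', but the content is the same uniform lower bound, so up to that point you coincide with the paper.

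The gap is that the contribution of $x\in S^c$ --- which you correctly single out as the crux --- is never actually estimated: the decisive inequality, that the tail of your iterated Egoroff partition is $O(t^{(\alpha+\var)/2})$, is precisely the step you defer, and it is not supplied. Nor can it be supplied from the $\alpha\mathrm{Ps}$ hypothesis alone. For $x\in E_k$ with $\tau_k>t$ the only bound the hypothesis yields is $\mu(B(x;1/t))\ge \tau_k^{-(\alpha+\var)}$, so the tail is $\sum_{\tau_k>t}\mu(E_k)\,\tau_k^{(\alpha+\var)/2}$, and there is no quantitative link between $\mu(E_k)$ and $\tau_k$; your hope that $\mu(B(x;1/t))$ ``typically captures substantial mass from neighbouring points'' fails exactly for sparse or clustered configurations. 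Concretely, a pure point measure $\mu=\sum_n m_n\delta_{x_n}$ with $m_n\sim n^{2\gamma-2}$, $0<\gamma<1/2$, whose $k$-th block of atoms is pairwise $\varepsilon_k$-separated, is $\alpha\mathrm{Ps}$ for \emph{every} $\alpha>0$ (countable support), yet along $t_k\sim 1/\varepsilon_k$ one gets $\Xi_\mu(t_k)\gtrsim \sum_{n\le N_k}m_n^{1/2}\sim t_k^{\gamma-o(1)}$, the divergence coming entirely from the atoms that Egoroff necessarily relegates to $S^c$; this is the intermittency phenomenon $D^+_{1/2}(\mu)>\dim_{\mathrm P}^+(\mu)$ behind~\eqref{egkt}. (For what it is worth, the paper's own proof is silent on the same point: its displayed inequality $\int \mathrm{d}\mu(x)\bigl(\int_S \mathrm{d}\mu(y)\,e^{-(x-y)^2t^2/4}\bigr)^{-1/2}\le\|\psi\|^2 D\,t^{(\alpha+\var)/2}$ needs a lower bound on the inner integral for $\mu$-a.e.\ $x$ including $x\in S^c$, which is not established.) So you have located the right crack, but the proof is not complete, and the tail estimate you postpone does not appear to be obtainable by the route you describe.
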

\begin{proof}
Suppose that $\psi\in\mathcal{H}^T_{\alpha\textrm{Ps}}\setminus\{0\}$, that is, that $\mu_\psi^T$ is positive and~$\alpha$Ps. By Theorem~\ref{THoS}, it follows, for every $0<\varepsilon\le1-\alpha$ and every $\delta>0$, that there exist~$S\subset\mathbb{R}$ such that~$\mu(S^c)<\delta$ and positive constants~$C$ and~$r_0<1$ such that, for each $0<r<r_0$ and for each~$x\in S$, $\mu(B(x;r))\ge Cr^\alpha$. Since $e^{-(x-y)^2t^2/4}$ is positive, one has, for every $x,t\in\mathbb{R}$,
\[0<\int_S {\mathrm d}\mu_{\psi}^T(y)\,e^{-(x-y)^2t^2/4}\le\int {\mathrm d}\mu_{\psi}^T(y)\,e^{-(x-y)^2t^2/4}<\infty.
\] Thus, using the same reasoning presented in the proof of Lemma~\ref{Stri}, one has
\begin{eqnarray*} \Xi_{\psi}^T(t)
\le \int {\mathrm d}\mu_\psi^T(x)\left(\int_S {\mathrm d}\mu_{\psi}^T(y)\,e^{-(x-y)^2t^2/4}\right)^{-1/2}\le \Vert\psi\Vert^2\,  D\, t^{(\alpha+\var)/2},
\end{eqnarray*}
for some finite~$D$ and large~$t$; relation~\eqref{eqXi} and the identity~$\mu_\psi^T(\mathbb R)=\Vert\psi\Vert^2$ were used in the last inequality.   
 \end{proof}

\begin{lemma}\label{T2.2}
 Let~$T$ be a bounded self-adjoint operator on~$\mathcal{H}$ and $\alpha\in (0,1)$. Then, for each $0<\varepsilon\le\min\{\alpha,1-\alpha\}$, one has
\begin{eqnarray*}
\nonumber\big\{\psi\mid \limsup_{t\rightarrow\infty} t^{-(\alpha-\varepsilon)/2}\, \Xi_{\psi}^T(t)<\infty\big\}&\subset& \big\{\psi\mid D_{1/2}^{+}(\mu_{\psi}^T)\le\alpha-\var/2\big\}\\
&\subset&\big\{\psi\mid \limsup_{t\rightarrow\infty} t^{-(\alpha+\var)/2}\,\Xi_{\psi}^T(t)<\infty\big\}.
\end{eqnarray*}
\end{lemma}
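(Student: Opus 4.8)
The plan is to compare the dynamical quantity $\Xi_\psi^T(t)$ with the partition integral that defines the generalized dimension $D_{1/2}^+$, both evaluated at reciprocal scales $s=1/t$. Throughout write $\mu=\mu_\psi^T$, let $J_t(x):=\int e^{-(x-y)^2t^2/4}\,\mathrm d\mu(y)$ denote the inner integral appearing in $\Xi_\mu$, and set $I(s):=\int \mu(B(x;s))^{-1/2}\,\mathrm d\mu(x)$, so that $\Xi_\mu(t)=\int J_t(x)^{-1/2}\,\mathrm d\mu(x)$ and, directly from Definition~\ref{DD2} with $q=1/2$, $D_{1/2}^+(\mu)=\limsup_{s\downarrow 0}\frac{2\ln I(s)}{\ln(1/s)}$. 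The crucial device is the exact layer-cake identity
\[
J_t(x)=\int_0^\infty \mu(B(x;r/t))\,w(r)\,\mathrm dr,\qquad w(r):=\tfrac r2\, e^{-r^2/4},
\]
obtained by writing the Gaussian as a superposition of indicators of balls; here $w$ is a probability density on $[0,\infty)$. From it I read off the two estimates that drive the argument: a clean pointwise lower bound $J_t(x)\ge c_0\,\mu(B(x;1/t))$ (restrict the integral to $r\in[1,2]$, with $c_0=e^{-1/4}-e^{-1}$), and, using monotonicity of $r\mapsto\mu(B(x;r/t))$ together with $\int_R^\infty w=e^{-R^2/4}$, the truncated upper bound $J_t(x)\le \mu(B(x;R/t))+\mu(\mathbb R)\,e^{-R^2/4}$ valid for every $R>0$.

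The second inclusion is the easy half. The lower bound gives $J_t(x)^{-1/2}\le c_0^{-1/2}\,\mu(B(x;1/t))^{-1/2}$ pointwise, hence $\Xi_\mu(t)\le c_0^{-1/2}\, I(1/t)$. If $D_{1/2}^+(\mu)\le\alpha-\varepsilon/2$, then for every $\delta>0$ one has $I(1/t)\le t^{(\alpha-\varepsilon/2+\delta)/2}$ for all large $t$; choosing $\delta=\varepsilon/2$ yields $\Xi_\mu(t)\le c_0^{-1/2}\,t^{\alpha/2}\le c_0^{-1/2}\,t^{(\alpha+\varepsilon)/2}$, so $\limsup_{t\to\infty} t^{-(\alpha+\varepsilon)/2}\,\Xi_\mu(t)<\infty$. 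Recalling $\mu=\mu_\psi^T$ and $\mu(\mathbb R)=\|\psi\|^2$, this is exactly the membership in the right-hand set.

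For the first inclusion I must reverse the comparison, that is, bound $I$ from above by $\Xi_\mu$; this requires an upper bound on $J_t$ and is where the genuine difficulty lies. The truncated estimate never converts into a purely multiplicative bound $J_t(x)\le C\,\mu(B(x;R/t))$: since $\mu$ need not be doubling, the additive Gaussian-tail term $\mu(\mathbb R)e^{-R^2/4}$ cannot be absorbed pointwise (a mass spike just outside a small ball makes $J_t(x)$ arbitrarily larger than $\mu(B(x;1/t))$). My plan is therefore to split the base space at scale $s=R/t$ into the good set $\{x:\mu(B(x;R/t))\ge\mu(\mathbb R)e^{-R^2/4}\}$, where $J_t(x)\le 2\mu(B(x;R/t))$ and hence $\int_{\mathrm{good}}\mu(B(x;R/t))^{-1/2}\,\mathrm d\mu\le\sqrt2\,\Xi_\mu(t)$, and its complement. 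On the bad set $\mu(B(x;R/t))$ lies below the threshold; I will control its contribution by letting the truncation radius grow slowly, $R=R(t)\to\infty$ with $\ln R(t)=o(\ln t)$, so that the threshold $\mu(\mathbb R)e^{-R(t)^2/4}\to0$ while the scale $R(t)/t$ differs from $1/t$ only by a factor negligible on the logarithmic scale defining $D_{1/2}^+$. Feeding the hypothesis $\Xi_\mu(t)\le M\,t^{(\alpha-\varepsilon)/2}$ into the resulting estimate $I(R(t)/t)\lesssim \Xi_\mu(t)+(\text{bad term})$ and passing to $\limsup_{t\to\infty}\frac{2\ln I(R(t)/t)}{\ln t}$ then delivers $D_{1/2}^+(\mu)\le\alpha-\varepsilon/2$; the extra $\varepsilon/2$ (rather than $\varepsilon$) is precisely the slack consumed by this scale/threshold mismatch.

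The main obstacle is thus the control of the bad set, equivalently the impossibility of a pointwise upper bound on $J_t$ in the absence of doubling. The proposed remedy is the combination of the layer-cake identity, the good/bad decomposition, and the scale-invariance of the exponent defining $D_{1/2}^+$, which together should let the unavoidable Gaussian-tail error be swallowed by the arbitrarily small loss from $\alpha-\varepsilon$ to $\alpha-\varepsilon/2$; everything else, namely the layer-cake identity and the two translations of growth rates into $\limsup$ exponents, is routine.
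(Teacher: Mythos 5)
Your easy half (the second inclusion) is correct and complete: the layer--cake identity $J_t(x)=\int_0^\infty \mu(B(x;r/t))\,\tfrac r2 e^{-r^2/4}\,\mathrm dr$ is right, the lower bound $J_t(x)\ge c_0\,\mu(B(x;1/t))$ follows, and the translation of $D_{1/2}^+(\mu)\le\alpha-\varepsilon/2$ into $I(1/t)\le t^{\alpha/2}$ is sound. Note, though, that you are re-proving from scratch what the paper obtains in one line by quoting Lemma~4.3 of [BGT], namely the identity $\limsup_{t\to\infty}\ln\Xi_\psi^T(t)/\ln t=\tfrac12 D_{1/2}^+(\mu_\psi^T)$, from which both inclusions are immediate.

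In the hard half there is a genuine gap: the ``bad term'' is never estimated, and the mechanism you offer for controlling it does not work as stated. Letting the threshold $\eta(t):=\mu(\mathbb R)e^{-R(t)^2/4}$ tend to $0$ does not help by itself --- on the bad set the integrand satisfies $\mu(B(x;R/t))^{-1/2}>\eta(t)^{-1/2}=\mu(\mathbb R)^{-1/2}e^{R(t)^2/8}\to\infty$, and nothing you have written bounds either the measure of the bad set or the integral over it. The estimate that actually closes this step is a covering argument: partition a compact interval containing $\supp\mu$ (this is where boundedness of $T$ enters, a hypothesis you never invoke) into $O(Kt/R)$ intervals $Q_j$ of length $R/t$; for $x\in Q_j$ one has $\mu(B(x;R/t))\ge\mu(Q_j)$, so $\int_{Q_j\cap\mathrm{bad}}\mu(B(x;R/t))^{-1/2}\,\mathrm d\mu\le\mu(Q_j)^{1/2}$, and any $Q_j$ meeting the bad set has $\mu(Q_j)<\eta(t)$; summing gives a bad term of order $(t/R)\,\eta(t)^{1/2}=(t/R)\,\mu(\mathbb R)^{1/2}e^{-R^2/8}$, which with $R(t)=4\sqrt{\ln t}$ is $o(1)$. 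With that in place, $I(R(t)/t)\le\sqrt2\,\Xi_\mu(t)+o(1)$ and your conclusion follows (you should also remark that every small scale $s$ is of the form $R(t)/t$ for some large $t$, since $D_{1/2}^+$ is a $\limsup$ over all $s\downarrow0$, not over a sequence). As written, the single step on which the whole first inclusion turns is left as an unestimated placeholder, so the proof is incomplete.
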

\begin{proof}
 Since, by hypothesis,~$T$ is bounded,~$\mu_{\psi}^T$ has compact support. Hence, the result is immediate from   
\begin{equation}\label{egkt}
\limsup_{t\rightarrow\infty}\frac{\ln \Xi_{\psi}^T(t)}{\ln t}=\dfrac{1}{2}D_{1/2}^{+}(\mu_{\psi}^T)\;,
\end{equation} 
proved in Lemma~4.3 in~\cite{BGT} (note the different notation for~$\Xi_\psi^T$ in~\cite{BGT}).   
 \end{proof}
 
 \OBSI
The hypothesis that  the operator~$T$ is bounded can be dropped as soon as one verifies~\eqref{egkt}. See~\cite{BGT} for a discussion about the validity of~\eqref{egkt} in more general situations.
 \OBSF

\begin{proposition}\label{Pa2}
Let $\alpha\in(0,1)$. Then, for every $0<\var\le\alpha$, 
\[
\big\{\psi\mid\limsup_{t\rightarrow\infty}t^{-(\alpha-\var)/2}\,\Xi_\psi^T(t)<\infty\big\}\subset\mathcal{H}^T_{\alpha\mathrm{Ps}}\setminus\{0\}.
\]
\end{proposition}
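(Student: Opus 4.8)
The plan is to convert the prescribed growth rate of $\Xi_\psi^T$ into a bound on the generalized upper dimension $D_{1/2}^+(\mu_\psi^T)$, push this estimate down to the packing upper dimension via~\eqref{D2+}, and then upgrade the resulting packing-\emph{dimension} singularity to genuine $\alpha$-packing singularity using Proposition~\ref{AHC}.

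First I would fix $0<\var\le\alpha$ and take $\psi\neq0$ belonging to the left-hand set, so that $\limsup_{t\to\infty}t^{-(\alpha-\var)/2}\,\Xi_\psi^T(t)<\infty$; the zero vector lies trivially in the closed subspace $\mathcal{H}^T_{\alpha\mathrm{Ps}}$, so only $\psi\neq0$ needs attention. The hypothesis provides a finite $C$ and a $t_1$ with $\Xi_\psi^T(t)\le C\,t^{(\alpha-\var)/2}$ for $t>t_1$. Because $\mu_\psi^T\neq0$ the quantity $\Xi_\psi^T(t)$ is strictly positive, so taking logarithms and dividing by $\ln t$ gives $\limsup_{t\to\infty}\ln\Xi_\psi^T(t)/\ln t\le(\alpha-\var)/2$. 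Since $T$ is bounded, $\mu_\psi^T$ has compact support and the identity~\eqref{egkt} applies, whence $\tfrac12 D_{1/2}^+(\mu_\psi^T)\le(\alpha-\var)/2$, i.e.\ $D_{1/2}^+(\mu_\psi^T)\le\alpha-\var$.

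Next I would apply the left inequality of~\eqref{D2+} with $q=1/2<1$, namely $D_{1/2}^+(\mu_\psi^T)\ge\dim_{\mathrm P}^+(\mu_\psi^T)$, to obtain $\dim_{\mathrm P}^+(\mu_\psi^T)\le\alpha-\var$. By Corollary~\ref{CPRo1} (with $E=\mathbb{R}$) this says precisely that $\mu_\psi^T$ is $\mathrm{a}(\alpha-\var)\mathrm{Pds}$. As $0\le\alpha-\var<1$, Proposition~\ref{AHC}(2) applies with exponent $\alpha-\var$ and increment $\var$; the admissibility requirement $0<\var\le1-(\alpha-\var)$ holds for every $0<\var\le\alpha$ because $\alpha<1$. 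It therefore yields that $\mu_\psi^T$ is $((\alpha-\var)+\var)\mathrm{Ps}=\alpha\mathrm{Ps}$, so $\psi\in\mathcal{H}^T_{\alpha\mathrm{Ps}}$, and since $\psi\neq0$ the desired inclusion follows.

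The step I expect to need the most care is this final upgrade: the control coming from $\Xi_\psi^T$ only bounds a generalized dimension, which by~\eqref{D2+} merely dominates $\dim_{\mathrm P}^+$, and hence delivers packing-\emph{dimension} singularity rather than packing singularity itself. The argument closes only because the hypothesis is phrased with the deficit $\alpha-\var$ instead of $\alpha$, leaving exactly the margin of size $\var$ that Proposition~\ref{AHC}(2) consumes when converting $\mathrm{a}(\alpha-\var)\mathrm{Pds}$ into $\alpha\mathrm{Ps}$. Verifying that this margin remains admissible across the whole range $0<\var\le\alpha$ is the small but essential bookkeeping that makes the inclusion hold for every such $\var$.
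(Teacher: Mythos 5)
Your proposal is correct and follows essentially the same route as the paper: convert the growth bound on $\Xi_\psi^T$ into $D_{1/2}^+(\mu_\psi^T)\le\alpha-\var$ via the identity~\eqref{egkt} (the paper routes this through Lemma~\ref{T2.2}, obtaining the slightly looser bound $\alpha-\var/2$, which suffices equally well), then use~\eqref{D2+} and Corollary~\ref{CPRo1} to get $\mathrm{a}(\alpha-\var)\mathrm{Pds}$, and finally Proposition~\ref{AHC}(2) to upgrade to $\alpha\mathrm{Ps}$. The bookkeeping of the $\var$-margin is exactly the point the paper relies on as well, so no gap.
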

\begin{proof}
Fix $0<\var\le\alpha$. If $\psi$ is such that $\limsup_{t\rightarrow\infty}t^{-(\alpha-\var)/2}\,\Xi_\psi^T(t)<\infty$, one has, from Lemma~\ref{T2.2}, that $D^+_{1/2}(\mu_\psi^T)\le\alpha-\var/2$. By inequalities~\eqref{D2+}, $D^+_{1/2}(\mu_\psi^T)\ge\dim_{\mathrm{P}}^+(\mu_\psi^T)$, and therefore, $\dim_{\mathrm{P}}^+(\mu_\psi^T)\le\alpha-\var/2$; consequently, it follows from Corollary~\ref{CPRo1} that $\mu_\psi^T$ is a($\alpha-\var/2$)Pds and so it is $\alpha$Ps, by Proposition~\ref{AHC}.  
\end{proof} 

By combining Propositions~\ref{Pa1} and~\ref{Pa2}, we obtain the following characterization of the $\alpha$-packing singular subspace.

\begin{theorem}\label{l21} 
Let~$T$ be a bounded self-adjoint operator on~$\mathcal{H}$. If $\alpha\in(0,1)$, then, for every $0<\var\le\min\{\alpha,1-\alpha\}$, 
\[
\Big\{\psi\mid\limsup_{t\rightarrow\infty}t^{-(\alpha-\var)/2}\,\Xi_\psi^T(t)<\infty\Big\}\subset\mathcal{H}^T_{\alpha\mathrm{Ps}}\setminus\{0\}\subset\Big\{\psi\mid\limsup_{t\rightarrow\infty}t^{-(\alpha+\var)/2}\,\Xi_\psi^T(t)<\infty\Big\}.
\]
\end{theorem}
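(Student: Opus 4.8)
The plan is to simply concatenate Propositions~\ref{Pa1} and~\ref{Pa2}, observing that the hypothesis $0<\var\le\min\{\alpha,1-\alpha\}$ is precisely what is needed to invoke both simultaneously. The two inclusions in the statement are not new facts but rather the two halves already established above, so the only point requiring genuine attention is the compatibility of the ranges of~$\var$.

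For the right-hand inclusion, namely $\mathcal{H}^T_{\alpha\mathrm{Ps}}\setminus\{0\}\subset\{\psi\mid\limsup_{t\rightarrow\infty}t^{-(\alpha+\var)/2}\,\Xi_\psi^T(t)<\infty\}$, I would appeal directly to Proposition~\ref{Pa1}, whose hypothesis demands $0<\var\le1-\alpha$. Since $\min\{\alpha,1-\alpha\}\le1-\alpha$, every~$\var$ in the theorem's range satisfies this constraint, and the inclusion holds verbatim. For the left-hand inclusion, I would invoke Proposition~\ref{Pa2}, which requires $0<\var\le\alpha$; again $\min\{\alpha,1-\alpha\}\le\alpha$ guarantees admissibility, so this inclusion follows as well. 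Chaining the two produces exactly the two-sided containment asserted in the theorem.

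Because both halves are already in hand, there is essentially no obstacle at this stage: the entire content is the bookkeeping remark that the single condition $\var\le\min\{\alpha,1-\alpha\}$ meets the two separate requirements $\var\le1-\alpha$ (for Proposition~\ref{Pa1}) and $\var\le\alpha$ (for Proposition~\ref{Pa2}). If there is a ``hard part,'' it lives upstream rather than here—specifically, in Proposition~\ref{Pa2}, which rests on the generalized-dimension identity~\eqref{egkt} together with the comparison~\eqref{D2+} specialized to $q=1/2$ and on Corollary~\ref{CPRo1}, and in Proposition~\ref{Pa1}, which rests on the singular Strichartz estimate of Lemma~\ref{Stri} applied to the measure restricted to the Egoroff set~$S$ furnished by Theorem~\ref{THoS}. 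With those in place, the present theorem is a one-line assembly.
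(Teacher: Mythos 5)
Your proposal is correct and matches the paper exactly: the paper introduces Theorem~\ref{l21} with the single sentence ``By combining Propositions~\ref{Pa1} and~\ref{Pa2}, we obtain the following characterization,'' which is precisely your assembly, with the condition $\var\le\min\{\alpha,1-\alpha\}$ serving only to make both propositions applicable simultaneously. Nothing further is needed.
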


\section{Generic quasiballistic and upper packing sets}\label{GenQB}
\zerarcounters

Let $(X,d)$ be a complete metric space of bounded self-adjoint operators, acting on the infinite-dimensional  separable Hilbert space~$\mathcal{H}$, such that the metric~$d$ convergence implies strong  convergence of operators. We denote its elements by~$T$. In order to obtain the main results of this section (i.e., Propositions~\ref{R1} and~\ref{R2C}), we will prove, for each fixed vector~$\psi\in\mathcal{H}$, that the set 
\[
C^{\psi;(a,b)}_{1\mathrm{uPd}}:=\Big\{T\in X\mid \dim_{\mathrm{P}}^+(\mu_{\psi;(a,b)}^T)=1\Big\}
\] is a $G_\delta$ set in~$X$.  Recall that here, $\mu_{\psi;(a,b)}^T$ denotes the restriction of $\mu_{\psi}^T$ to the open interval $(a,b)$, $-\infty\le a<b\le +\infty$. If~$(a,b)=\mathbb{R}$, we simply denote~$C^{\psi;(a,b)}_{1\mathrm{uPd}}$ by~$C^{\psi}_{1\mathrm{uPd}}$.

\begin{lemma} \label{P2}
For each $0\neq\psi\in\mathcal H$, one has
 \begin{equation}\label{integeA}
C^\psi_{1\mathrm{uPd}}=\bigcap_{l=1}^\infty\bigcap_{k=1}^\infty A^\psi_{1-1/(2l)-1/(2k)}\,,
\end{equation}
 where, for every $\alpha>0$,
\begin{equation*}
A^\psi_{\alpha}:=\bigcap_{n=0}^\infty\big\{T\in X\mid  \mathrm{\; for\;each}\; m,\;  \exists\; t>m \mathrm{\; with}\; t^{-\alpha/2}\,\Xi_{\psi}^T(t)>n\big\}\;.
\end{equation*}
\end{lemma}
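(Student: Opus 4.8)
The plan is to first rewrite each factor $A^\psi_\alpha$ in a transparent form and then match the resulting growth condition with the dimensional condition defining $C^\psi_{1\mathrm{uPd}}$. Unwinding the quantifiers, for fixed $\alpha$ one has
\[
A^\psi_\alpha=\Big\{T\in X\mid \limsup_{t\rightarrow\infty}t^{-\alpha/2}\,\Xi_\psi^T(t)=+\infty\Big\}.
\]
Indeed, the inner condition ``for each $m$ there is $t>m$ with $t^{-\alpha/2}\Xi_\psi^T(t)>n$'' says precisely that the $\limsup$ is $\ge n$, and intersecting over all integers $n\ge 0$ forces it to equal $+\infty$ (the only borderline case $\limsup=n$ is harmless, since if the $\limsup$ were finite one could pick an integer strictly above it). Because $t^{-\alpha/2}$ is decreasing in $\alpha$ for $t>1$, the family $A^\psi_\alpha$ is nonincreasing in $\alpha$. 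The numbers $\alpha_{l,k}:=1-\tfrac1{2l}-\tfrac1{2k}$ lie in $[0,1)$ and have supremum $1$, so antitonicity yields
\[
\bigcap_{l=1}^\infty\bigcap_{k=1}^\infty A^\psi_{\alpha_{l,k}}=\bigcap_{0<\alpha<1}A^\psi_\alpha.
\]

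Next I would establish two pointwise translations, valid for each fixed $T$ and the fixed vector $0\neq\psi$, between finiteness of the $\Xi$-growth and an upper bound on $\dim^+_{\mathrm P}(\mu_\psi^T)$, exploiting the ``$\var$-room'' in Propositions~\ref{Pa1} and~\ref{Pa2} together with Proposition~\ref{AHC} and Corollary~\ref{CPRo1}. On one hand, if $\limsup_{t\rightarrow\infty}t^{-\beta/2}\Xi_\psi^T(t)<\infty$ for some $\beta\in(0,1)$, then Proposition~\ref{Pa2} (with $\alpha=\beta+\var$) gives that $\mu_\psi^T$ is $(\beta+\var)\mathrm{Ps}$, whence $\dim^+_{\mathrm P}(\mu_\psi^T)\le\beta+\var$ by Proposition~\ref{AHC}(1) and Corollary~\ref{CPRo1}; letting $\var\downarrow 0$ yields $\dim^+_{\mathrm P}(\mu_\psi^T)\le\beta$. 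On the other hand, if $\dim^+_{\mathrm P}(\mu_\psi^T)\le\beta<1$, then $\mu_\psi^T$ is $\mathrm a\beta\mathrm{Pds}$ by Corollary~\ref{CPRo1}, hence $(\beta+\var_1)\mathrm{Ps}$ by Proposition~\ref{AHC}(2), and Proposition~\ref{Pa1} then produces $\limsup_{t\rightarrow\infty}t^{-\beta'/2}\Xi_\psi^T(t)<\infty$ for every $\beta'=\beta+\var_1+\var_2\in(\beta,1)$, i.e.\ $T\notin A^\psi_{\beta'}$.

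Finally I would assemble these into the set equality~\eqref{integeA}. If $\dim^+_{\mathrm P}(\mu_\psi^T)=1$, then $\dim^+_{\mathrm P}(\mu_\psi^T)>\beta$ for every $\beta<1$, so the contrapositive of the first translation gives $T\in A^\psi_\beta$ for every $\beta\in(0,1)$, hence $T\in\bigcap_{0<\alpha<1}A^\psi_\alpha$. Conversely, if $\dim^+_{\mathrm P}(\mu_\psi^T)<1$, then $\dim^+_{\mathrm P}(\mu_\psi^T)\le\beta$ for some $\beta<1$, and the second translation supplies a $\beta'\in(\beta,1)$ with $T\notin A^\psi_{\beta'}$, so $T\notin\bigcap_{0<\alpha<1}A^\psi_\alpha$. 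Together with the reduction of the double intersection above, this establishes~\eqref{integeA}. I expect the only genuine subtlety to be the bookkeeping of the auxiliary parameters: the two approximation steps (passing between $\mathrm a\beta\mathrm{Pds}$ and $(\beta+\var)\mathrm{Ps}$, and between $\alpha\mathrm{Ps}$ and the $\Xi$-growth) each cost an $\var$, which is exactly what the two independent indices $l$ and $k$ are there to absorb; one must also be slightly careful with the borderline $\limsup=n$ in the quantifier reduction, which is why $n$ is allowed to range over all integers.
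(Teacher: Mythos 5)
Your proof is correct and follows essentially the same route as the paper: both rest on the sandwich provided by Propositions~\ref{Pa1} and~\ref{Pa2} (i.e.\ Theorem~\ref{l21}) together with Proposition~\ref{AHC} and Corollary~\ref{CPRo1}, plus the monotonicity of $\alpha\mapsto A^\psi_\alpha$. If anything, your write-up is slightly more complete, since you make explicit the antitonicity argument identifying $\bigcap_{l,k}A^\psi_{\alpha_{l,k}}$ with $\bigcap_{0<\alpha<1}A^\psi_\alpha$, a step the paper's proof (which ends with two a priori different double intersections sandwiching $C^\psi_{1\mathrm{uPd}}$) leaves implicit.
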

\begin{proof} 
Fix~$\alpha\in(0,1)$ and let $\mathcal{H}_{\mathrm{a}\alpha{\mathrm{Pds}}}^{T}:=\{\zeta\in\mathcal H\mid \mu^T_{\zeta}\mathrm{\; is}\; \mathrm{a}\alpha{\mathrm{Pds}} \}$; by  Corollary~\ref{CPRo1},  
\[
\mathcal{H}_{\mathrm{a}\alpha{\mathrm{Pds}}}^{T}=\{\zeta\mid\dim_{\mathrm{P}}^+(\mu^T_{\zeta})\le\alpha\}\,.
\] 

By Proposition~\ref{AHC}, one has the inclusions $\mathcal{H}_{\alpha {\mathrm{Ps}}}^{T}\subset\mathcal{H}_{\mathrm{a}\alpha {\mathrm{Pds}}}^{T}\subset\mathcal{H}_{(\alpha+\var) {\mathrm{Ps}}}^{T}$; then, by Theorem~\ref{l21}, it follows that, for each~$T\in X$  and each $0<\varepsilon\le\min\{\alpha,(1-\alpha)/2\}$, 
\[
\big\{\zeta\mid \limsup_{t\rightarrow\infty}t^{-(\alpha-\var)/2}\,\Xi_{\zeta}^T(t)<\infty\big\}\subset \mathcal{H}_{\mathrm{a}\alpha {\mathrm{Pds}}}^{T}\setminus\{0\}\subset\big\{\zeta\mid \limsup_{t\rightarrow\infty}t^{-(\alpha+2\varepsilon)/2}\,\Xi_{\zeta}^T(t)<\infty\big\}.
\]  Hence, for fixed $0\neq\psi\in\mathcal{H}$ and $0<\varepsilon\le\min\{\alpha,(1-\alpha)/2\}$, one has
\begin{eqnarray*}
\bigcap_{n=0}^\infty\bigcap_{m=0}^\infty\bigcup_{t>m}\left\{T\in X\mid t^{-(\alpha+2\varepsilon)/2}\,\Xi_{\psi}^T(t)>n\right\}\subset\left\{T\in X\mid  \dim_{\mathrm{P}}^+(\mu_{\psi}^T)>\alpha\right\}\\
\subset\bigcap_{n=0}^\infty\bigcap_{m=0}^\infty\bigcup_{t>m}\left\{T\in X\mid t^{-(\alpha-\varepsilon)/2}\,\Xi_{\psi}^T(t)>n\right\};
\end{eqnarray*}
 that is, 
\begin{equation*}
A^\psi_{\alpha+2\varepsilon}\subset \{T\in X\mid\dim_{\mathrm{P}}^+(\mu_{\psi}^T)>\alpha\} \subset A^\psi_{\alpha-\varepsilon}\;.
\end{equation*} 

Finally, by replacing $\alpha$ by $\alpha-3\var$ and taking $\var=1/(8k)$, $k\ge 1$, and $\alpha=1-1/(2l)$, $l\ge 1$, one obtains
\begin{equation*}
\bigcap_{l=1}^\infty\bigcap_{k=1}^\infty A^\psi_{1-1/(2l)-1/(8k)} \subset C^\psi_{1\mathrm{uPd}}\subset \bigcap_{l=1}^\infty\bigcap_{k=1}^\infty A^\psi_{1-1/(2l)-1/(2k)}\;,
\end{equation*} since $C^\psi_{1\mathrm{uPd}}=\bigcap_{l=1}^\infty\bigcap_{k=1}^\infty \{T\in X\mid \dim_{\mathrm{P}}^+(\mu_{\psi}^T)>1-1/(2l)-3/(8k)\}$.  
\end{proof}

We remark that Lemma~\ref{P2} holds true for  restrictions to intervals~$(a,b)$. The choice $(a,b)=\mathbb R$ was just for simplicity. However, we will keep the interval in Theorem~\ref{prop1uPd} to deal with some subtleties there.

\begin{theorem} \label{prop1uPd}
Let $-\infty\le a<b\le+\infty$ and~$\psi\in\mathcal H$. Then, the set $C^{\psi;(a,b)}_{1\mathrm{uPd}}$ is a~$G_\delta$ set in~$X$.
\end{theorem}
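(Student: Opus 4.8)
The plan is to reduce the statement to a single lower-semicontinuity property of the functional $\Xi$ and then assemble the $G_\delta$ structure from the countable intersections already isolated in Lemma~\ref{P2}. First I would invoke the version of Lemma~\ref{P2} for the restricted measure $\mu^T_{\psi;(a,b)}$ (valid by the remark following that lemma), which gives
\[
C^{\psi;(a,b)}_{1\mathrm{uPd}}=\bigcap_{l=1}^\infty\bigcap_{k=1}^\infty A^{\psi;(a,b)}_{1-1/(2l)-1/(2k)},\qquad A^{\psi;(a,b)}_{\alpha}=\bigcap_{n=0}^\infty\bigcap_{m=0}^\infty\bigcup_{t>m}\big\{T\in X\mid t^{-\alpha/2}\,\Xi^T_{\psi;(a,b)}(t)>n\big\}.
\]
Since a countable intersection of $G_\delta$ sets is $G_\delta$, and a countable intersection of open sets is $G_\delta$, it suffices to prove that, for each fixed $\alpha\in(0,1)$, each $n$ and each $t>0$, the superlevel set $\{T\in X\mid t^{-\alpha/2}\,\Xi^T_{\psi;(a,b)}(t)>n\}$ is open. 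Then the (arbitrary) union over $t>m$ is open, $A^{\psi;(a,b)}_\alpha$ is a countable intersection of open sets, hence $G_\delta$, and the double intersection over $l,k$ is again $G_\delta$. Thus everything reduces to the single claim: for each fixed $t>0$, the map $T\mapsto \Xi^T_{\psi;(a,b)}(t)$ is lower semicontinuous on $(X,d)$.

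To establish this lower semicontinuity I would use the standing hypothesis that $d$-convergence implies strong convergence. If $T_j\to T$ in $(X,d)$, then $T_j\to T$ strongly; since a strongly convergent sequence of bounded operators is uniformly bounded by Banach--Steinhaus, all spectral measures $\mu^{T_j}_\psi$ are supported in a common compact set and $\langle\psi,T_j^k\psi\rangle\to\langle\psi,T^k\psi\rangle$ for every $k$, so $\mu^{T_j}_\psi\to\mu^T_\psi$ weakly. Writing $f_t(x,y):=e^{-(x-y)^2t^2/4}$, the functional defined in \eqref{DefL} reads $\Xi^T_{\psi;(a,b)}(t)=\int_{(a,b)}\big(\int_{(a,b)}f_t(x,y)\,\mathrm d\mu^T_\psi(y)\big)^{-1/2}\,\mathrm d\mu^T_\psi(x)$, and the goal is $\liminf_j \Xi^{T_j}_{\psi;(a,b)}(t)\ge \Xi^T_{\psi;(a,b)}(t)$. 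The two tools I would combine are: first, the Portmanteau theorem on the open set $(a,b)$, which for a nonnegative lower semicontinuous integrand (such as a continuous function times $\mathbf 1_{(a,b)}$) gives $\liminf_j\int g\,\mathrm d\mu^{T_j}_\psi\ge\int g\,\mathrm d\mu^T_\psi$; and second, a weak-convergence Fatou lemma to pass the $\liminf$ simultaneously through the varying integrand $\big(\int_{(a,b)}f_t\,\mathrm d\mu^{T_j}_\psi\big)^{-1/2}$ and the varying outer measure $\mu^{T_j}_\psi$.

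The hard part — and the reason I keep the interval $(a,b)$ explicit, matching the authors' remark about ``subtleties'' — is the interaction between the open-interval restriction and the negative power $(\cdot)^{-1/2}$ of the inner integral. Lower semicontinuity of $\Xi^T_{\psi;(a,b)}(t)$ demands that the inner integral be controlled \emph{from above} along the sequence, whereas weak convergence on an open set only yields a lower bound, $\liminf_j\int_{(a,b)}f_t(x,\cdot)\,\mathrm d\mu^{T_j}_\psi\ge\int_{(a,b)}f_t(x,\cdot)\,\mathrm d\mu^T_\psi$, which is the wrong direction for bounding $\big(\int_{(a,b)}f_t\,\mathrm d\mu^{T_j}_\psi\big)^{-1/2}$ below. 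To bridge this I would dominate the inner integral by the one over the \emph{closed} interval $[a,b]$ — for which the indicator is upper semicontinuous, so that $\limsup_j\int_{[a,b]}f_t(x,\cdot)\,\mathrm d\mu^{T_j}_\psi\le\int_{[a,b]}f_t(x,\cdot)\,\mathrm d\mu^T_\psi$ — and only afterwards integrate the resulting lower semicontinuous integrand against $\mu^{T_j}_\psi$ over the \emph{open} set $(a,b)$.

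This comparison is exact precisely when $\mu^T_\psi(\{a\})=\mu^T_\psi(\{b\})=0$, so the genuine obstacle is possible boundary mass at the endpoints $a,b$ (the case $a=-\infty$ or $b=+\infty$ being harmless, since the measures are compactly supported). Reconciling the open-interval Portmanteau estimate for the outer integral with the closed-interval domination of the inner integral, and absorbing the endpoint atoms — e.g.\ by an exhaustion of $(a,b)$ by compact subintervals together with the monotone structure of $\Xi$ — is therefore the crux of the argument. Once the lower semicontinuity of $T\mapsto\Xi^T_{\psi;(a,b)}(t)$ is secured, the $G_\delta$ conclusion follows immediately from the set-theoretic assembly described in the first paragraph.
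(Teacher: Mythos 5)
Your set-theoretic skeleton coincides with the paper's: by Lemma~\ref{P2} (in its restricted-interval form) everything reduces to showing that, for each fixed $t$, the superlevel set $\{T\in X\mid t^{-\alpha/2}\,\Xi^T_{\psi;(a,b)}(t)>n\}$ is open, and your observation that \emph{lower semicontinuity} of $T\mapsto\Xi^T_{\psi;(a,b)}(t)$ would already suffice (the paper proves full continuity) is a genuine economy. The problem is that you never establish that semicontinuity: you explicitly label it ``the crux'' and leave it open, and the bridge you sketch does not close. Dominating the inner integral by the one over $[a,b]$ and using upper semicontinuity of $\mathbf 1_{[a,b]}$ along the weakly convergent sequence gives
\[
\liminf_j\Bigl(\int_{(a,b)}f_t(x,y)\,\mathrm d\mu^{T_j}_\psi(y)\Bigr)^{-1/2}\ \ge\ \Bigl(\int_{[a,b]}f_t(x,y)\,\mathrm d\mu^{T}_\psi(y)\Bigr)^{-1/2},
\]
but the right-hand side is \emph{smaller} than $\bigl(\int_{(a,b)}f_t(x,y)\,\mathrm d\mu^{T}_\psi(y)\bigr)^{-1/2}$ as soon as $\mu^T_\psi$ charges $\{a\}$ or $\{b\}$; so after the Portmanteau step for the outer integral you only reach a lower bound that falls short of $\Xi^T_{\psi;(a,b)}(t)$ by exactly the endpoint contribution. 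The suggested repair (``exhaustion by compact subintervals together with the monotone structure of $\Xi$'') is not carried out and is not obviously available: shrinking the inner domain of integration \emph{increases} the integrand $(\cdot)^{-1/2}$ while shrinking the outer domain \emph{decreases} the integral, so the monotonicity cuts in opposite directions and does not produce a sandwich. As written, the decisive analytic step of the proof is missing.

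For comparison, the paper does not attack the restricted functional by Portmanteau at all. It first disposes of the degenerate case in which $\mu^T_\psi((a,b))=0$ for every $T$, then replaces $\psi$ by $\xi:=P^T((a,b))\psi$ (so that $\mu^T_\xi=\mu^T_{\psi;(a,b)}$), obtains continuity of $T\mapsto\Xi^T_\psi(t)$ from the equivalence of strong convergence with strong dynamical convergence, and transfers it to $T\mapsto\Xi^T_\xi(t)$ via the continuity of the restriction map $\mathcal M_+(\mathbb R)\ni\mu\mapsto\mu(I\cap\cdot)\in\mathcal M_+(I)$ in the \emph{vague} topology on the open interval (the point of citing~\cite{LSt}), where test functions are compactly supported in $I$ and the endpoints are invisible by construction; it also verifies that $\Xi^W_\xi(t)$ remains well defined for $W$ in a neighbourhood of $T$, a point you do not address. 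To salvage your route you would have to prove the lower semicontinuity in the presence of possible endpoint atoms, or else adopt the paper's device of working with $\xi$ and the vague topology on $(a,b)$.
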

\begin{proof}
If, for every $T\in X$, $\mu_{\psi}^T((a,b)\cap\cdot)=0$ (which is the case when, for every $T\in X$, $\supp(\mu_\psi^T)\cap(a,b)=\emptyset$), then $\dim_{\mathrm{P}}^+(\mu_{\psi;(a,b)}^T)=0$ and $C^{\psi;(a,b)}_{1\mathrm{uPd}}=\emptyset$ is a $G_\delta$ set in $X$. 

Otherwise, suppose that~$\xi=\xi(T,\psi,(a,b)):=P^T((a,b))\psi\neq 0$ for some~$T\in X$. Since, for bounded operators, strong  convergence implies strong resolvent convergence, which in turn is equivalent to strong dynamical convergence (see Theorem~10.1.8 in~\cite{Cesar}), it follows, for each $t\in\mathbb{R}$, that the mapping $X\ni T\mapsto \Xi_\psi^T(t)$ is continuous. 

Now, let $\mathcal{M}_+(I)$ represent the set of  positive finite Borel measures on the open interval~$I$ endowed with the vague topology; the continuity of the mapping $\mathcal{M}_+(\mathbb R)\ni\mu(\cdot)\mapsto \mu(I\cap\cdot)\in \mathcal{M}_+(I)$ (see~\cite{LSt} for details), combined with the continuity of $X\ni T\mapsto \Xi_\psi^T(t)$, implies that $X\ni T\mapsto \Xi_\xi^T(t)$ is also continuous. 

Observe that if there exist $T\in X$, $\xi\neq 0$, $t,x\in\mathbb{R}$ such that  $\int {\mathrm d}\mu_\xi^T(y)\,e^{-(x-y)^2t^2/4}>0$, then the continuity of $X\ni T\mapsto \int {\mathrm d}\mu_\xi^T(y)\,e^{-(x-y)^2t^2/4}$    implies that $\Xi_\xi^W(t)$ exists for every~$W$ in some neighbourhood of~$T$.

Therefore, one has, for every $n\ge 0$ and every $k,l\ge1$, that 
\[
\big\{T\in X\mid t^{-1/2+1/(4l)+1/(4k)}\,\Xi_{\xi}^T(t)> n\}
\] is an open subset of~$X$.  Now, relation~\eqref{integeA}, that is,
\begin{eqnarray*}
C^{\psi;(a,b)}_{1\mathrm{uPd}}=\bigcap_{l=1}^\infty\bigcap_{k=1}^\infty\bigcap_{n=0}^\infty\bigcap_{m=0}^\infty\bigcup_{t>m}\big\{T\in X\mid t^{-1/2+1/(4l)+1/(4k)}\,\Xi_{\xi}^T(t)> n\big\},
\end{eqnarray*} 
completes the proof.  
\end{proof}

\begin{corollary}
\label{R1}
Let $-\infty\le a<b\le+\infty$, $\psi\in\mathcal H$, and denote by  $(\mu_\psi^T)_{1\mathrm{Pc}}$ the $1$-packing continuous component of $\mu^T_{\psi }$. Suppose that  $C^{\psi;(a,b)}_{1\mathrm{uPc}}=\{T\in X\mid  (\mu_\psi^T)_{1\mathrm{Pc}}((a,b))\neq 0\}$ is dense in~$X$. Then,  the set~$C^{\psi;(a,b)}_{1\mathrm{uPd}}$ is generic in~$X$.
\end{corollary}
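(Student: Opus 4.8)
The plan is to deduce genericity from two facts: that $C^{\psi;(a,b)}_{1\mathrm{uPd}}$ is already a $G_\delta$ set, and that it is dense. Theorem~\ref{prop1uPd} supplies the first, so, since $X$ is a complete metric space and hence a Baire space, it suffices to prove density. The hypothesis gives density of $C^{\psi;(a,b)}_{1\mathrm{uPc}}$, so the entire statement will follow at once from the set inclusion $C^{\psi;(a,b)}_{1\mathrm{uPc}}\subset C^{\psi;(a,b)}_{1\mathrm{uPd}}$: a set containing a dense set is dense, and a dense $G_\delta$ is generic. Thus the real content is this inclusion, which is a purely measure-theoretic statement about a single operator $T$.

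To prove the inclusion, I would fix $T\in C^{\psi;(a,b)}_{1\mathrm{uPc}}$ and set $\nu:=(\mu_\psi^T)_{1\mathrm{Pc}}(\,(a,b)\cap\,\cdot\,)$, the restriction to $(a,b)$ of the $1$-packing continuous component. By assumption $\nu((a,b))\neq 0$, so $\nu\neq 0$; moreover $\nu$ is still $1\mathrm{Pc}$, because $P^1(S)=0$ forces $\nu(S)\le(\mu_\psi^T)_{1\mathrm{Pc}}(S)=0$. The key step is to show $\dim_{\mathrm{P}}^+(\nu)=1$. Let $S$ be any Borel set supporting $\nu$, i.e.\ $\nu(\mathbb R\setminus S)=0$. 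Were $\dim_{\mathrm{P}}(S)<1$, the definition of packing dimension would give $P^1(S)=0$, and the $1\mathrm{Pc}$ property (Definition~\ref{SCB}) would then force $\nu(S)=0$, whence $\nu\equiv 0$, contradicting $\nu\neq 0$. Since every subset of $\mathbb R$ has packing dimension at most $1$, every supporting set of $\nu$ satisfies $\dim_{\mathrm{P}}(S)=1$, so by Definition~\ref{defDimMu} we obtain $\dim_{\mathrm{P}}^+(\nu)=1$.

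It remains to transfer this to $\mu_{\psi;(a,b)}^T$, for which I would use monotonicity of the upper packing dimension. Since $\nu\le\mu_{\psi;(a,b)}^T$, every Borel set $S$ with $\mu_{\psi;(a,b)}^T(\mathbb R\setminus S)=0$ also satisfies $\nu(\mathbb R\setminus S)=0$; hence the infimum defining $\dim_{\mathrm{P}}^+(\mu_{\psi;(a,b)}^T)$ runs over a subfamily of the supporting sets of $\nu$, giving $\dim_{\mathrm{P}}^+(\mu_{\psi;(a,b)}^T)\ge\dim_{\mathrm{P}}^+(\nu)=1$, and thus equality, since these dimensions never exceed $1$. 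This places $T$ in $C^{\psi;(a,b)}_{1\mathrm{uPd}}$ and closes the inclusion. I do not expect a serious obstacle here: the argument is elementary once one observes that a nonzero $1\mathrm{Pc}$ measure automatically has full upper packing dimension. The only points requiring care are the two monotonicity-type observations — that restriction to $(a,b)$ preserves the $1\mathrm{Pc}$ property and that $\nu\le\mu_{\psi;(a,b)}^T$ forces $\dim_{\mathrm{P}}^+(\nu)\le\dim_{\mathrm{P}}^+(\mu_{\psi;(a,b)}^T)$ — both of which follow directly from Definitions~\ref{SCB} and~\ref{defDimMu}.
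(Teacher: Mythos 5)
Your proposal is correct and follows essentially the same route as the paper: invoke Theorem~\ref{prop1uPd} for the $G_\delta$ property and reduce genericity to the inclusion $C^{\psi;(a,b)}_{1\mathrm{uPc}}\subset C^{\psi;(a,b)}_{1\mathrm{uPd}}$, which the paper dispatches with a one-line appeal to Definitions~\ref{defDimMu} and~\ref{SCB}. Your only addition is to spell out that one-liner — a nonzero $1\mathrm{Pc}$ measure has full upper packing dimension, and upper packing dimension is monotone under domination — which is a faithful elaboration of the same argument.
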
 
\begin{proof}
Since, by Theorem~\ref{prop1uPd}, $C^{\psi;(a,b)}_{1\mathrm{uPd}}$ is a~$G_\delta$ set in~$X$, we just need to show that $C^{\psi;(a,b)}_{1\mathrm{uPd}}$ is dense. Suppose, then, that $(\mu_\psi^T)_{1\mathrm{Pc}}((a,b))>0$; thus, by Definitions~\ref{defDimMu} and~\ref{SCB}, $\dim_\mathrm{P}^+(\mu^T_{\psi;(a,b) })=1$, and therefore,~$C^{\psi;(a,b)}_{1\mathrm{uPc}}\subset C^{\psi;(a,b)}_{1\mathrm{uPd}}$. But now, since $C^{\psi;(a,b)}_{1\mathrm{uPc}}$ is dense, it follows that $C^{\psi;(a,b)}_{1\mathrm{uPd}}$ is also dense.  
\end{proof}

\OBSI \label{RIMP}
 A well-known fact about discrete Schr\"odinger operators in~$l^2(\mathbb{Z})$, with action~\eqref{LPo} and general real potentials~$(V_n)$, is the presence of a common set of cyclic vectors $\{\delta_{-1},\delta_0\}$. When the elements of the space~$X$ are of this type, the results stated in Corollary~\ref{R1} can be strengthened. Namely, if for $\zeta\in\{\delta_{-1},\delta_0\}$ the spectral measure $\mu^T_{\zeta;(a,b)}$ is 1Pd, then $\mu^T_{\psi;(a,b)}$ is 1Pd for every vector $\psi\ne0$ (since $P^T_{1\mathrm{Pd}}((a,b))=P^T((a,b))$ in this case), which implies that~$\{T\in X\mid\dim_{\mathrm{P}}(E)=1$ for some $E\subset\sigma(T)\cap(a,b)\}$  is a~$G_\delta$ set. 
\OBSF

Write $C_{1\mathrm{uPd}}:=\{T\in X_{\lambda,\nu}\mid$  $\dim_{\mathrm{P}}^+(\mu_{\delta_0}^T)=\dim_{\mathrm{P}}^+(\mu_{\delta_{-1}}^T)=1\}$. The inclusion $C_{1{\mathrm{uPd}}}\subset C_{{\mathrm{QB}}}$ (see the definition of the latter in the Introduction; this inclusion results from Proposition~\ref{GK} and the second inequality in~\eqref{bLIP}), together with Corollary~\ref{R1}, lead us to the following

\begin{proposition} 
Suppose that the hypotheses of Corollary~\ref{R1} are satisfied for~$\psi=\delta_0$ and $(a,b)=\mathbb{R}$. Then, $C_{{\mathrm{QB}}}$ is generic in~$X$.
\label{R2C}
\end{proposition}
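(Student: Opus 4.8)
The plan is to obtain genericity of $C_{\mathrm{QB}}$ as an immediate consequence of Corollary~\ref{R1} together with the transport bound~\eqref{bLIP}, without any new analytic work. First I would apply Corollary~\ref{R1} directly in the special case $\psi=\delta_0$ and $(a,b)=\mathbb{R}$: by hypothesis the density assumption on $C^{\delta_0;\mathbb{R}}_{1\mathrm{uPc}}$ holds, so Corollary~\ref{R1} yields that the set $C^{\delta_0}_{1\mathrm{uPd}}=\{T\in X\mid \dim_{\mathrm{P}}^+(\mu_{\delta_0}^T)=1\}$ is generic in~$X$. Concretely, this set is a dense $G_\delta$: it is $G_\delta$ by Theorem~\ref{prop1uPd} (with $(a,b)=\mathbb{R}$) and dense by the argument in the proof of Corollary~\ref{R1}.

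The central step is to verify the set inclusion $C^{\delta_0}_{1\mathrm{uPd}}\subset C_{\mathrm{QB}}$. I would fix an arbitrary $T\in C^{\delta_0}_{1\mathrm{uPd}}$, so that $\dim_{\mathrm{P}}^+(\mu_{\delta_0}^T)=1$, and feed this into inequality~\eqref{bLIP}, which gives $\beta_T^+(q)\ge \dim_{\mathrm{P}}^+(\mu_{\delta_0}^T)=1$ for every $q>0$. Combining this lower bound with item~3 of Proposition~\ref{GK}, namely $\beta_T^+(q)\in[0,1]$, forces $\beta_T^+(q)=1$ for all $q>0$, which is exactly the condition defining $C_{\mathrm{QB}}$. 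Hence $T\in C_{\mathrm{QB}}$, and the inclusion follows. Note that only the spectral measure of $\delta_0$ is involved here, since the second inequality in~\eqref{bLIP} references $\mu_{\delta_0}^T$ alone; thus, although $C_{1\mathrm{uPd}}\subset C^{\delta_0}_{1\mathrm{uPd}}$, it is the larger (and already generic) set $C^{\delta_0}_{1\mathrm{uPd}}$ that we exploit.

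To conclude, I would invoke the fact that any set containing a dense $G_\delta$ is itself generic (residual): since $C_{\mathrm{QB}}\supset C^{\delta_0}_{1\mathrm{uPd}}$ and the latter is a dense $G_\delta$, the set $C_{\mathrm{QB}}$ is generic in~$X$. There is no genuine analytic obstacle in this proof; every nontrivial ingredient (the dimension bound~\eqref{bLIP} and the boundedness $\beta_T^+(q)\le 1$) is imported from earlier results. The only point requiring care is conventional rather than substantive: one must read ``generic'' in the residual-set sense, so that genericity passes to supersets; $C_{\mathrm{QB}}$ is asserted to contain a dense $G_\delta$, not necessarily to be one itself, and this is the standard meaning used throughout the paper.
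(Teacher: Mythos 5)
Your proof is correct and follows essentially the same route as the paper: the authors likewise obtain the inclusion into $C_{\mathrm{QB}}$ from inequality~\eqref{bLIP} together with Proposition~\ref{GK}, and then combine it with the genericity of the $1\mathrm{uPd}$ set furnished by Corollary~\ref{R1}. Your observation that only the $\delta_0$ condition is needed (so the larger set $C^{\delta_0}_{1\mathrm{uPd}}$ already suffices) is a minor, accurate streamlining of the paper's phrasing, which states the inclusion for the smaller set $C_{1\mathrm{uPd}}$ involving both $\delta_0$ and $\delta_{-1}$.
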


\section{Proof of Theorem~\ref{LPO}}\label{PLPO}
\zerarcounters
We need the following 

\begin{theorem}[Theorem~1.1 in~\cite{DG}]
Suppose that~$\Omega$ is a Cantor group and that $\tau:\Omega\rightarrow\Omega$ is a minimal translation. Then, for a dense set of $g\in \CC(\Omega,\mathbb{R})$ and every $\kappa\in\Omega$, the spectrum of $H_{g,\tau}^\kappa$ is purely absolutely continuous.
\label{TDG}
\end{theorem}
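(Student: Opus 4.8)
The plan is to reduce the claim to the classical Floquet--Bloch theory for periodic Schr\"odinger operators, using the group structure of~$\Omega$ to produce a dense supply of sampling functions that yield \emph{periodic} potentials. First I would record that, being compact, metrizable, totally disconnected and abelian, $\Omega$ is an inverse limit $\Omega=\varprojlim_m\Omega_m$ of finite groups, with continuous surjections $\pi_m:\Omega\to\Omega_m$; minimality of~$\tau$ means that the translation vector $\omega$ (so that $\tau(\kappa)=\kappa+\omega$) topologically generates~$\Omega$, whence each $\pi_m(\omega)$ generates~$\Omega_m$ and has order exactly $p_m:=|\Omega_m|\to\infty$. Next I would show that the functions factoring through some finite quotient are dense in $\CC(\Omega,\mathbb R)$: they form a subalgebra containing the constants and separating points (since the $\pi_m$ jointly separate points), so Stone--Weierstrass applies; equivalently, the clopen partitions induced by the $\pi_m$ generate the topology, so locally constant functions are uniformly dense. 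These locally constant $g$ will be the claimed dense set.

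The second step is to observe that such functions give periodic operators. Fix $g=\bar g\circ\pi_m$ with $\bar g:\Omega_m\to\mathbb R$. Then for every $\kappa\in\Omega$,
\[
V_n(\kappa)=g(\tau^n\kappa)=\bar g\bigl(\pi_m(\kappa)+n\,\pi_m(\omega)\bigr)
\]
depends on~$n$ only through $n\bmod p_m$, so $V(\kappa)$ is a $p_m$-periodic potential and $H^\kappa_{g,\tau}$ is a periodic discrete Schr\"odinger operator, \emph{uniformly in}~$\kappa$. I would then invoke the standard periodic theory: the spectrum is analysed through the monodromy matrix $M_{p_m}(E)$ over one period and its discriminant $\Delta(E)=\mathrm{tr}\,M_{p_m}(E)$, a nonconstant real polynomial of degree~$p_m$ in~$E$; the spectrum equals $\{E:|\Delta(E)|\le 2\}$, a finite union of nondegenerate bands, and the Bloch/direct-integral decomposition identifies the operator on each band with multiplication by a strictly monotone, real-analytic band function. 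This yields purely absolutely continuous spectrum, with no eigenvalues and no singular continuous part, for every~$\kappa$. Combining the two steps gives the theorem for the dense set of locally constant~$g$.

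The hard part will be the Floquet step, and specifically the exclusion of degeneracies: one must ensure that no band collapses to a point (a flat band) that could support point spectrum. This is exactly where the softer part of the argument ends and one relies on the classical fact that, for discrete periodic operators, the discriminant is a genuine nonconstant polynomial, so the band functions are nonconstant and each band carries purely absolutely continuous spectrum. Making this fully rigorous amounts to reproducing (or citing) the standard one-dimensional periodic theory rather than deriving it from the packing-dimension machinery of the previous sections; the remaining ingredients (the inverse-limit description of~$\Omega$ and the Stone--Weierstrass density of locally constant functions) are routine.
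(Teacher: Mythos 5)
Your argument is correct and matches the intended proof: the paper does not prove this statement at all but imports it verbatim as Theorem~1.1 of~\cite{DG}, and the proof there is precisely your periodic-approximation scheme --- locally constant sampling functions (those factoring through finite quotients of the profinite group~$\Omega$) are uniformly dense by Stone--Weierstrass and produce $p_m$-periodic potentials for every~$\kappa$, whose discrete Schr\"odinger operators have purely absolutely continuous spectrum by standard Floquet theory, with no flat bands since the discriminant is a nonconstant (monic, degree~$p_m$) polynomial. No gaps.
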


\begin{proof}(Theorem~\ref{LPO})
Fix $\kappa\in\Omega$,  $\tau:\Omega\rightarrow\Omega$ a minimal translation of the Cantor group~$\Omega$, and consider in  $\CC(\Omega,\mathbb{R})$.

 Since, by Theorem~\ref{TDG},~$C_{\mathrm{ac}}^\kappa:=\{T\in X_\kappa\mid\sigma(T)$ is purely absolutely continuous$\}$ is dense in $X_\kappa$,  it follows that $C_{\mathrm{1uPd}}^\kappa\supset C_{\mathrm{ac}}^\kappa$ is also dense in~$X_{\kappa}$. Thus, by Corollary~\ref{R1} and Remark~\ref{RIMP}, we conclude that~$C_{\mathrm{1uPd}}^\kappa$ is generic in~$X_{\kappa}$.

The second assertion in the statement of the theorem follows from the inclusion~$C_{1{\mathrm{uPd}}}^\kappa\subset C_{{\mathrm{QB}}}^\kappa$ and Proposition~\ref{R2C}.
\end{proof}

\subsubsection*{Acknowledgment} This work was partially supported by the Brazilian Agency CNPq (Projeto Universal 41004/2014-8).

\noindent   Departamento de Matem\'atica, UFMG, Belo Horizonte,  MG, 30161-970 Brazil

\

\noindent  Departamento de Matem\'atica, UFSCar,  S\~{a}o Carlos, SP,  13560-970 Brazil


\begin{thebibliography}{31}

\bibitem{Avila} A. Avila, On the spectrum and Lyapunov exponent of limit-periodic Schr\"odinger operators, Commun. Math. Phys. {\bf 288}  (2009), 907--918.


\bibitem{BCM} J.-M. Barbaroux, J.-M. Combes and R. Montcho,    Remarks on the relation between quantum dynamics and fractal spectra, J. Math. Anal. Appl. {213}, 698--722.

\bibitem{BGT3} J.-M. Barbaroux, F. Germinet and S. Tcheremchantsev,  Generalized fractal dimensions: equivalence and basic properties, J. Math. Pure et Appl. {\bf80}  (1997), 977--1012 (2001)  

\bibitem{BGT}  J.-M. Barbaroux, F. Germinet and S. Tcheremchantsev,  Fractal dimensions and the phenomenon of intermittency in quantum dynamics, Duke Math. J. {\bf 110}  (2001),  161--194.

\bibitem{SCC} S. L. Carvalho and C. R. de Oliveira, Correlation dimension wonderland theorems. Submitted to publication (http://www.dm.ufscar.br/profs/oliveira/wonderCorrDim.pdf).

\bibitem{DG} D. Damanik and Z. Gan, Spectral properties of limit-periodic Schr\"odinger operators, Commun. Pure Appl. Anal. {\bf10} (2011), 859-871.

\bibitem{Cesar} C. R. de Oliveira,   Intermediate spectral theory and quantum dynamics, Birkh\"auser, 2009. 

\bibitem{F} K. J. Falconer,   Fractal geometry,  Wiley, 1990.

\bibitem{GK} F. Germinet and A. Klein, A characterization of the Anderson metal-insulator transport transition, Duke Math. J. {\bf124} 92004),  309--350.

\bibitem{Guar1} I. Guarneri, Spectral properties of quantum diffusion on discrete lattices, Europhys. Lett. {\bf10} (1989), 95--100.

\bibitem{Guar} I. Guarneri and H. Schulz-Baldes,  Lower bounds on wave-packet propagation by packing dimensions of spectral measures,  {Math. Phys. Elect. J.} {\bf5} (1999), 1--16.

\bibitem{Last} Y. Last,  Quantum dynamics and decomposition of singular continuous spectra, J. Funct. Anal. {\bf142} (2001), 406--445.

\bibitem{LSt} D. Lenz and P. Stollmann,  Generic sets in spaces of measures and generic singular continuous spectrum for Delone Hamiltonians, {Duke Math. J.} {\bf131} (2006), 203--217.

\bibitem{Simon} B. Simon,    Operators with singular continuous spectrum: I. General operators, Ann. of Math. (2) {\bf141} (1995), 131--145.

\bibitem{strichartz} R. S. Strichartz,   Fourier asymptotics of fractal   measures,{ J. Funct. Anal.}   {\bf89} (1990), 154--187.


\end{thebibliography}
\end{document}